\theoremstyle{plain}
\newtheorem{theorem}{Theorem} 
\newtheorem{example}[theorem]{Example}
\newtheorem{proposition}[theorem]{Proposition}
\theoremstyle{definition}
\theoremstyle{remark}
\newtheorem{remark}[theorem]{Remark}
\newcommand{\spaceIV}{\texttt{spaceIV} }
\newcommand{\R}{\mathbb{R}}
\DeclareMathOperator*{\argmin}{arg\,min}
\DeclarePairedDelimiterX{\norm}[1]{\lVert}{\rVert}{#1}
\DeclarePairedDelimiterX{\abs}[1]{\lvert}{\rvert}{#1}
\renewcommand{\epsilon}{\varepsilon}
\newcommand{\rank}[1]{\operatorname{Rank}\!\left(#1\right)}
\newcommand{\im}[1]{\operatorname{Im}\!\left(#1\right)}
\newcommand{\nullspace}[1]{\operatorname{Null}\!\left(#1\right)}
\newcommand{\dimension}{\operatorname{dim}}
\newcommand{\independent}{\perp\!\!\!\perp}
\newcommand{\cov}{\operatorname{Cov}}
\newcommand{\vI}{\operatorname{Id}}
\newcommand{\IPA}{\operatorname{PA}_I}
\newcommand{\PA}{\operatorname{PA}}
\newcommand{\ND}{\operatorname{ND}}
\newcommand{\DE}{\operatorname{DE}}
\newcommand{\AN}{\operatorname{AN}}
\newcommand\ti[1]{{\tilde{#1}}}
\newcommand{\supp}{\operatorname{supp}}
\definecolor{col1}{RGB}{88,140,126}
\definecolor{col2}{RGB}{242,227,148}
\definecolor{col3}{RGB}{242,174,114}
\definecolor{col4}{RGB}{217,100,89}
\definecolor{col5}{RGB}{140,70,70}
\colorlet{lightgray}{black!15}
\title{Identifiability of Sparse Causal Effects using Instrumental Variables}
\author[1, *]{\href{mailto:<np@math.ku.dk>}{Niklas~Pfister}{}}
\author[1, *]{\href{mailto:<jonas.peters@math.ku.dk>}{Jonas~Peters}{}}
\affil[1]{%
  Department of Mathematical Sciences\\
  University of Copenhagen\\
  Denmark
}
\affil[*]{%
  Authors contributed equally.
}
\begin{document}

\maketitle

\begin{abstract}
  Exogenous heterogeneity, for example, in the form of instrumental
  variables can help us learn a system's underlying causal structure
  and predict the outcome of unseen intervention experiments.  In this
  paper, we consider linear models in which the causal effect from
  covariates $X$ on a response $Y$ is sparse. We provide conditions
  under which the causal coefficient becomes identifiable from the
  observed distribution. These conditions can be satisfied even if the
  number of instruments is as small as the number of causal parents.
  We also develop graphical criteria under which identifiability holds
  with probability one if the edge coefficients are sampled randomly
  from a distribution that is absolutely continuous with respect to
  Lebesgue measure and $Y$ is childless.  As an estimator, we propose
  \spaceIV and prove that it consistently estimates the causal effect
  if the model is identifiable and evaluate its performance on
  simulated data.  If identifiability does not hold, we show that it
  may still be possible to recover a subset of the causal parents.
\end{abstract}

\section{Introduction}

Instrumental variables \citep{Wright1928, angrist1995identification,
  Newey2013} allow us to consistently estimate causal effects from
covariates $X$ on a response $Y$ even if the covariates and response
are connected through hidden confounding.  These approaches usually
rely on identifying moment equations such as
$\cov[I, Y - X^\top \beta] = 0$ with $I$ being the instrumental
variable (IV).  Under some assumptions such as the exclusion
restriction, this equation is satisfied for the true causal
coefficient $\beta = \beta^*$; in a linear setting, for example, this
is the case if we can write $Y = X^\top \beta^* + g(H, \varepsilon^Y)$
with $H, \varepsilon^Y$ being independent of $X$ and $I$
and $\varepsilon^Y$ independent of $X$.  Identifiability of
$\beta^*$, however, requires that the moment equation is not satisfied
for any other $\beta \neq \beta^*$. Formally, this condition is often
written as a rank condition on the covariance between $I$ and $X$,
which implies that the dimension of $I$ must be at least as large as
the number of components of $X$.

In this work, we consider the case where the causal coefficient
$\beta^*$ is assumed to be sparse.  This assumption allows us to relax
existing identifiability conditions: it is, for example, possible to
identify $\beta^*$ even if there are much less instruments than
covariates.  Our results are proved in the context of linear
structural causal models (SCMs) \citep{Pearl2009, Bongers2021}, that
is, we also assume linearity among the $X$ variables.  We prove
sufficient conditions for identifiability of $\beta^*$ that are based
on rank conditions of the matrix of causal effects from $I$ on the
parents of $Y$.  We then investigate for which graphical structures we
can expect such conditions to hold.  Consider, for example, the graph
shown in Figure~\ref{fig:ex3-marg}.  Square nodes represent
instruments, and hidden variables between variables in $X \cup \{Y\}$
can exist but are not drawn (we formally introduce such graphs in
Section~\ref{sec:graphs}). Sparse identifiability in this graph is not
obvious: Is the causal effect from the parents of $Y$ to $Y$
generically identifiable if the true underlying and unknown graph is
the one shown (including the two dashed edges)? And what about the
graph excluding the two dashed edges?
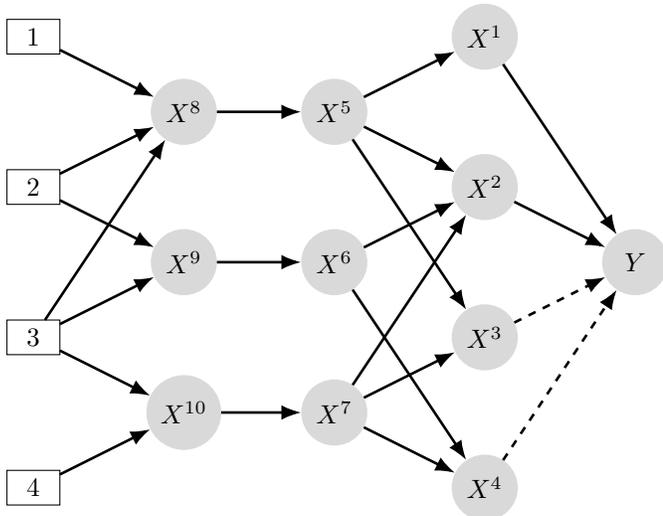
\begin{figure}[ht]
  \centerline{
  \begin{tikzpicture}[scale=1]
    \tikzstyle{VertexStyle} = [shape = circle, minimum width = 2.5em, fill=lightgray]
    \Vertex[Math,L=Y,x=0,y=0]{Y}
    \Vertex[Math,L=X^1,x=-2,y=3]{X1}
    \Vertex[Math,L=X^2,x=-2,y=1]{X2}
    \Vertex[Math,L=X^3,x=-2,y=-1]{X3}
    \Vertex[Math,L=X^4,x=-2,y=-3]{X4}
    \Vertex[Math,L=X^5,x=-4,y=2]{X5}
    \Vertex[Math,L=X^6,x=-4,y=0]{X6}
    \Vertex[Math,L=X^7,x=-4,y=-2]{X7}
    \Vertex[Math,L=X^8,x=-6,y=2]{X8}
    \Vertex[Math,L=X^9,x=-6,y=0]{X9}
    \Vertex[Math,L=X^{10},x=-6,y=-2]{X10}
    \tikzstyle{VertexStyle} = [draw, shape = rectangle, minimum
    width=2em]
    \Vertex[Math,L=1,x=-8.0,y=3]{1}
    \Vertex[Math,L=2,x=-8.0,y=1]{2}
    \Vertex[Math,L=3,x=-8.0,y=-1]{3}
    \Vertex[Math,L=4,x=-8.0,y=-3]{4}
    \tikzset{EdgeStyle/.append style = {-Latex, line width=1}}
    \Edge(1)(X8)
    \Edge(2)(X8)
    \Edge(2)(X9)
    \Edge(3)(X8)
    \Edge(3)(X9)
    \Edge(3)(X10)
    \Edge(4)(X10)
    \Edge(X1)(Y)
    \Edge(X2)(Y)
    \Edge(X8)(X5)
    \Edge(X9)(X6)
    \Edge(X10)(X7)
    \Edge(X5)(X1)
    \Edge(X5)(X2)
    \Edge(X5)(X3)
    \Edge(X6)(X2)
    \Edge(X6)(X4)
    \Edge(X7)(X2)
    \Edge(X7)(X3)
    \Edge(X7)(X4)
    \tikzset{EdgeStyle/.append style = {-Latex, line width=1, dashed}}
    \Edge(X3)(Y)
    \Edge(X4)(Y)
\end{tikzpicture}}
\caption{Graphical representation of two linear SCMs, as described in
  Section~\ref{sec:graphs} (hidden variables between $X$ and $Y$
  variables exist but are not drawn).  If the data come from a system
  corresponding to the unknown graph including (or excluding) dashed
  edges, can we identify the causal effect from $X$ to $Y$ from the
  joint distribution over $I$, $X$, and $Y$? These questions are
  discussed in Example~\ref{ex:graph}.  }\label{fig:ex3-marg}
\end{figure}
We translate the rank conditions for identifiability to structural
SCMs whose coefficients are drawn randomly from a distribution that is
absolutely continuous with respect to Lebesgue measure. This allows us
to develop graphical criteria that can answer these questions.

If identifiability holds, the causal effect can be estimated from
data. We propose an estimator called \spaceIV (`\textbf{spa}rse
\textbf{c}ausal \textbf{e}ffect \textbf{IV}').  It is based on the
limited information maximum likelihood (LIML) estimator
\citep{anderson1949estimation, amemiya1985advanced}.  This estimator
has similar properties as the two stage least squares estimator and
has the same asymptotic normal distribution, for example
\citep{mariano2001simultaneous}.  But as it minimizes the
Anderson-Rubin test statistic, it allows us to prove theoretical
guarantees.  We evaluate the performance of \spaceIV on simulated
data. If identifiability does not hold, we prove that it may still be
possible to identify a subset of the causal parents of $Y$.

Numerous extensions to the classical linear instrumental variable
setting have been proposed.  For example, nonlinear effects
\citep{Imbens2009, Dunker2014, Torgovitsky2015, Loh2019,
  Christiansen2020DG} have been considered, often in relation with
higher order moment equations \citep{Hartford2017, Singh2019,
  Bennett2019, Muandet2020, Saengkyongam2022}.  Furthermore,
\citet{Belloni2012, Mckeigue2010} assume that the effect from the
instruments on the covariates is sparse. For example, it has been
shown that consistent estimators exist if at least half of all
instruments are valid \citep{Kang2016}.  To the best of our knowledge,
while existing work considers sparsity constraints between the
instruments and the covariates (`first stage'), the assumption of a
sparse causal effect (`second stage') and its benefits has not yet
been analyzed.

Our paper is structured as follows. Section~\ref{sec:model} introduces
the formal setup. Section~\ref{sec:identifsparse} presents the main
identifiability result for sparse causal effect models and
Section~\ref{sec:graphchar} develops the corresponding graphical
criteria.  Section~\ref{sec:method} introduces the estimator \spaceIV
and Section~\ref{sec:experiments} includes simulation
experiments. Code is attached as supplementary material.

\section{IV Models with Shift Interventions}\label{sec:model}

Consider the following structural causal model (SCM)
\begin{equation}
\begin{split}
    X &:= BX + AI + h(H, \epsilon^X)\\ 
    Y &:= X^{\top}\beta^* + g(H, \epsilon^Y),
\end{split}
\label{eq:scm}
\end{equation}
where $h$ and $g$ are arbitrary measurable functions and
$\vI-B$\footnote{Here $\vI$ denotes the identity matrix.} is
invertible. Here, $X \in \R^d$ denotes the observed variables,
$H \in \R^q$ the unobserved variables, $I \in \R^m$ the instrumental
variables (following an $m$-dimensional distribution, which is not
modelled explicitly), $Y \in \R$ the response and $I$, $H$,
$\epsilon^X$ and $\epsilon^Y$ are jointly independent and assume that
the covariates are non-descendants of $Y$ (see also
Remark~\ref{rem:children}).  In contrast to classical IV settings, we
thus explicitly model the causal effects of the instruments $I$ on the
predictor variables $X$.  Throughout the paper, we assume that
$\cov[I]$ is invertible.  We assume that we have access to an i.i.d.\
data set $(X_1, Y_1, I_1), \ldots, (X_n, Y_n, I_n)$ sampled from the
induced distribution and are interested in estimating the causal
effect $\beta^*$.  We call the set of non-zero components of $\beta^*$
the parents of $Y$ and denote it by $\PA(Y)$.

Our model covers the case, where we observe data from $m$ different
experiments, each of which corresponds to a fixed intervention
shift. More precisely, we can choose $I$ such that for all
$k \in \{1, \ldots, m\}$, we have $P(I = e_k) = 1/m$, with $e_k$,
$k \in \{1, \ldots, m\}$, being the $k$-th unit vector in
$\mathbb{R}^m$. Here, each column in the matrix $A$ specifies a
different experiment in which (a subset of) the $X$ variables is
shifted by the amount specified in that column.

\subsection{Graphical Representation} \label{sec:graphs}

Given a data generating process of the form~\eqref{eq:scm}, we
represent it graphically as follows: Each of the $d$
components\footnote{In a slight abuse of notation, we sometimes
  identify each component with its index.} of $X$ is represented by a
node, which we call a \emph{prediction node}. There is a directed edge
from $X^i$ to $X^j$ if and only if $B_{ji} \neq 0$.  In addition, we
represent the $k$th component of $I$ by a square node with label
`$k$', which we call \emph{instrument node}. There is a directed edge
from $k$ to $X^j$ if and only if $A_{j,k} \neq 0$.  (There are no
connections between instrument nodes, even though they may be
dependent.)  Finally, we represent the response $Y$ with the same node
style as is used for the predictors and include a directed edge from
$X^j$ to $Y$ if and only if $\beta^j\neq 0$.  In the graph, we do not
represent hidden variables (even though they are allowed to exist).
Consequently, such graphs do not satisfy the Markov condition
\citep[e.g.,][]{Lauritzen1996}.

\begin{example} \label{ex:1}
Consider an SCM of the following form
\begin{align}
\left(
\begin{matrix}
X^1\\
X^2\\
X^3
\end{matrix} 
\right)
&:= 
\left(
\begin{matrix}
0\\
b_{21} X^1\\
0
\end{matrix} 
\right)
+ 
\left(
\begin{matrix}
a_{11} & a_{12} \\
0 & a_{22} \\
0 & a_{32} \\
\end{matrix} 
\right)
\left(
\begin{matrix}
I^1\\
I^2
\end{matrix} 
\right)  + h(H, \epsilon^X) 
\nonumber\\
Y &:= 
\left(
\begin{matrix}
X^1 & X^2 & X^3
\end{matrix} 
\right)
\left(
\begin{matrix}
0\\
\beta^*_{2}\\
0
\end{matrix} 
\right)
 + g(H, \epsilon^Y), \label{eq:ex1}
\end{align}
where 
$I^1$, $I^2$, $H$, $\epsilon^Y$, $\epsilon^X$ are jointly independent. 
Figure~\ref{fig:ex1} shows the corresponding graphical representation.
\begin{figure}[t]
  \centering
  \begin{tikzpicture}[scale=1]
    \tikzstyle{VertexStyle} = [shape = circle, minimum width =
    2.5em, fill=lightgray]
    \Vertex[Math,L=Y,x=0,y=0]{Y}
    \Vertex[Math,L=X^1,x=-6,y=0]{X1}
    \Vertex[Math,L=X^2,x=-3,y=0]{X2}
    \Vertex[Math,L=X^3,x=-1.5,y=2]{X3}
    \tikzstyle{VertexStyle} = [draw, shape = rectangle, minimum
    width=2em]
    \Vertex[Math,L=1,x=-6.0,y=2]{1}
    \Vertex[Math,L=2,x=-4.5,y=2]{2}
    \tikzstyle{VertexStyle} = [draw, dashed, shape = circle, minimum
    width=2.5em]
    \tikzset{EdgeStyle/.append style = {-Latex, line width=1}}
    \Edge(1)(X1)
    \Edge(X1)(X2)
    \Edge(X2)(Y)
    \Edge(2)(X1)
    \Edge(2)(X3)
    \Edge(2)(X2)
  \end{tikzpicture}
  \caption{Graphical representation of Example~\ref{ex:1}, as
    described in Section~\ref{sec:graphs} (the hidden variable $H$ is
    omitted).  }\label{fig:ex1}
\end{figure}
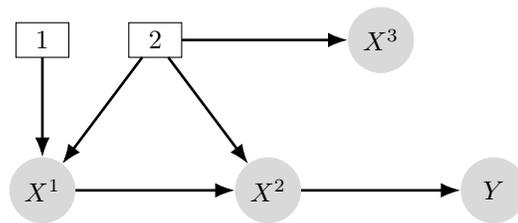
\end{example}

\section{Identifiability in Sparse-Effect IV Models}

Consider a data generating process of the form~\eqref{eq:scm}.
Because the intervention $I$ does not directly enter the structural
assignment of $Y$ and $(H, \epsilon^Y, I)$ are jointly independent,
the causal coefficient $\beta^*$ satisfies the \emph{moment condition}
\begin{equation}
  \label{eq:moment_eq}
  \cov\left(I, Y-X^{\top}\beta^*\right)=0.
\end{equation}
The solution space of the moment condition is given by
\begin{equation*}
  \mathcal{B}:=\{\beta\in\R^d\,\vert\, \cov(I,
  X)\beta=\cov(I, Y)\}.
\end{equation*}
It can be shown that this is a $(d-\rank{A})$-dimensional space. The
true causal coefficient $\beta^*$ is therefore identified by
\eqref{eq:moment_eq} if and only if $\rank{A}=d$. This directly
implies that the number of instruments needs to be greater or equal to
the number of predictors, a well-known necessary condition for
identifiability in the linear IV model.

In this work, we investigate the case where $\mathcal{B}$ is allowed
to be non-degenerate.  To analyse conditions for identifiability, we
define the $(m \times d)$-matrix
\begin{equation}
  \label{eq:Cmatrix}
  C\coloneqq A^{\top}(\vI-B)^{-\top}.
\end{equation}
The entry $C_{i,j}$ corresponds to the the $i$-th component of the
total causal effect from $I$ onto $X^j$ in the SCM given in
\eqref{eq:scm}. This entry correspond to summing over all directed
paths from instrument node $i$ to $X^j$ and for each path, multiplying
the coefficients. The matrix $C$ will play a central role when
analyzing identifiability.  For example, using the matrix $C$,
Proposition~\ref{thm:partial_identifiability} characterizes settings
under which individual components of the causal coefficient $\beta^*$
are identifiable. This result does not require any additional
assumptions on the underlying model. In
Section~\ref{sec:identifsparse}, we then show that if the causal
coefficient $\beta^*$ is sparse (i.e., it contains many zeros) it can
still be identifiable even if $\mathcal{B}$ is non-degenerate.
\begin{proposition}[Partial identifiability of causal coefficient]
  \label{thm:partial_identifiability}
  Consider a data generating process of the form~\eqref{eq:scm}. Then, for all $j\in\{1,\ldots,d\}$ it holds that
  \begin{equation*}
      \beta^*_j\text{ is identifiable by \eqref{eq:moment_eq}}
      \quad\Leftrightarrow\quad
      \nullspace{C}_j=\{0\},
  \end{equation*}
  where $\nullspace{C}_j$ denotes the $j$-th coordinate of the null
  space of $C$. Moreover, whenever $\operatorname{Null}(C)_j=\{0\}$ it
  holds that $\beta^*_j=(\cov(I,X)^{\dagger}\cov(I,Y))_j$, where
  $(\cdot)^{\dagger}$ denotes the Moore-Penrose inverse.
\end{proposition}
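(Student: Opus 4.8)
The plan is to reduce the whole statement to a description of the affine solution space $\mathcal{B}$ in terms of $\nullspace{C}$; once that is in place, both the equivalence and the closed-form expression follow by elementary linear algebra. First I would write the structural assignment for $X$ in reduced form, $X = (\vI-B)^{-1}AI + (\vI-B)^{-1}h(H,\epsilon^X)$, and exploit that $I$ is independent of $(H,\epsilon^X,\epsilon^Y)$. Since the covariance between $I$ and any function of $(H,\epsilon^X)$ vanishes, a direct computation gives $\cov(I,X) = \cov[I]\,C$; substituting $Y = X^\top\beta^* + g(H,\epsilon^Y)$ into $\cov(I,Y)$ and using independence again yields $\cov(I,Y) = \cov[I]\,C\,\beta^*$. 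These two identities are exactly where the model assumptions and the definition~\eqref{eq:Cmatrix} of $C$ enter; everything afterwards is linear algebra.

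Next I would characterize $\mathcal{B}$. Plugging the two identities into its definition, the constraint $\cov(I,X)\beta = \cov(I,Y)$ becomes $\cov[I]\,C(\beta-\beta^*) = 0$, and since $\cov[I]$ is invertible this is equivalent to $C(\beta-\beta^*)=0$. Hence $\mathcal{B} = \beta^* + \nullspace{C}$. The coefficient $\beta^*_j$ is identified by~\eqref{eq:moment_eq} precisely when every $\beta\in\mathcal{B}$ shares the same $j$-th coordinate, i.e.\ when $v_j = 0$ for all $v\in\nullspace{C}$; by the stated convention this is exactly the condition $\nullspace{C}_j=\{0\}$, which gives the claimed equivalence.

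For the closed-form expression I would use the two identities once more to write $\cov(I,X)^{\dagger}\cov(I,Y) = (\cov[I]\,C)^{\dagger}\,\cov[I]\,C\,\beta^*$. The product $M^{\dagger}M$ equals the orthogonal projection onto $\nullspace{M}^{\perp}$, and invertibility of $\cov[I]$ gives $\nullspace{\cov[I]\,C}=\nullspace{C}$, so the product equals $P\beta^*$, where $P$ is the orthogonal projection onto $\nullspace{C}^{\perp}$. Decomposing $\beta^* = P\beta^* + (\vI-P)\beta^*$ with $(\vI-P)\beta^*\in\nullspace{C}$, the assumption $\nullspace{C}_j=\{0\}$ forces the $j$-th coordinate of $(\vI-P)\beta^*$ to vanish, so $(P\beta^*)_j=\beta^*_j$, as desired.

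The covariance computations are routine given the independence assumptions; the only points requiring care are the correct reading of $\nullspace{C}_j$ as the set of $j$-th entries of null-space vectors (which is either $\{0\}$ or all of $\R$) and the standard fact that $M^{\dagger}M$ projects onto the row space $\nullspace{M}^\perp$ rather than onto $\im{M}$. I do not anticipate a genuine obstacle beyond keeping the transpose conventions in $C = A^\top(\vI-B)^{-\top}$ consistent throughout.
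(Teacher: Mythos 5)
Your proposal is correct and follows essentially the same route as the paper: compute $\cov(I,X)=\cov[I]C$ and $\cov(I,Y)=\cov[I]C\beta^*$ from the independence assumptions, identify $\mathcal{B}$ as the affine space $\beta^*+\nullspace{C}$, and read off both the coordinatewise identifiability criterion and the Moore--Penrose formula. If anything, your justification that $M^{\dagger}M$ is the orthogonal projection onto $\nullspace{M}^{\perp}$ combined with $\nullspace{\cov[I]C}=\nullspace{C}$ is slightly cleaner than the paper's direct manipulation of $(\cov[I]C)^{\dagger}$.
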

The proof can be found in Appendix~\ref{app:partial}.

\subsection{Identifiability of sparse causal coefficients} \label{sec:identifsparse}

We have argued that the causal parameter is in general not fully
identified by the moment condition \eqref{eq:moment_eq}.  However, we
can obtain identifiability by additionally assuming that the causal
coefficient $\beta^*$ is sparse. To make this more precise, consider
the following optimization
\begin{equation}
  \label{eq:sparse_optim}
  \min_{\beta\in\mathcal{B}}\,\norm{\beta}_0.
\end{equation}
As we will see below, under mild conditions on the interventions $I$,
the causal coefficient $\beta^*$ is a unique solution to this problem.

We now make the following assumptions\footnote{Here we use the
  convention that for a matrix $D\in\R^{m\times d}$ and a subset
  $S\subseteq\{1,\ldots,d\}$ the subindexed matrix $D_S$ corresponds
  to the $m \times |S|$-submatrix of $D$ consisting of all columns
  that are indexed by $S$ and $\im{D}$ denotes the image of $D$.}.
\begin{enumerate}
\item[(A1)] It holds that $\rank{C_{\PA(Y)}}=\abs{\PA(Y)}$.
\item[(A2)] For all $S\subseteq\{1,\ldots,d\}$ it holds that
  \begin{align*}
    \left.
    \begin{aligned}
      &\rank{C_S}\leq\rank{C_{\PA(Y)}} \text{ and}\\
      &\im{C_S}\neq\im{C_{\PA(Y)}}
    \end{aligned}\right\}
    \text{ implies }\\
    \left\{
    \forall w\in\R^{\abs{S}}:\quad C_Sw\neq C_{\PA(Y)}\beta^{*}_{\PA(Y)}\right..
  \end{align*}
\item[(A3)] For all $S\subseteq\{1,\ldots,d\}$ with
  $\abs{S}=\abs{\PA(Y)}$ and $S\neq\PA(Y)$ 
  we have $\im{C_{S}}\neq\im{C_{\PA(Y)}}$.
\end{enumerate}
(A1) is a necessary assumption in order to identify $\beta^*$; it
guarantees that an IV regression based on the true parent set $\PA(Y)$
identifies the correct coefficients. (A2) is an assumption on the
underlying causal model that ensures that certain types of
cancellation cannot occur. It is a rather mild assumption in the
following sense: if one considers the true causal parameter $\beta^*$
as randomly drawn from a distribution absolutely continuous with
respect to Lebesgue measure it would almost surely lead to a system
that satisfies (A2) (see Proposition~\ref{prop:random_proj} in
Appendix~\ref{app:additional}). As shown in the following theorem,
(A1) and (A2) are sufficient to ensure that $\beta^*$
solves~\eqref{eq:sparse_optim}. Additionally assuming (A3) ensures
that the solution is unique; it can be seen as requiring an extra
level of heterogeneity in how the interventions affect the system (see
also Section~\ref{sec:graphchar}).

\begin{theorem}[Identifiability of sparse causal parameters]
  \label{thm:sparse_identifiability}
  Consider a data generating process of the form~\eqref{eq:scm}. If
  (A1) and (A2) hold, then $\beta^*$ is a solution to
  \eqref{eq:sparse_optim}. Moreover, if, in addition, (A3) holds, then
  $\beta^*$ is the unique solution.
\end{theorem}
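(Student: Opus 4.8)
The plan is to first collapse the moment condition into a purely linear-algebraic statement about $C$, after which assumptions (A1)--(A3) drive a short support-counting argument. First I would compute the two cross-covariances explicitly. Solving the first line of \eqref{eq:scm} gives $X=(\vI-B)^{-1}(AI+h(H,\epsilon^X))$; since $I$ is independent of $(H,\epsilon^X)$ the cross-covariance with $h$ vanishes, so $\cov(I,X)=\cov(I)\,A^\top(\vI-B)^{-\top}=\cov(I)\,C$, and likewise $\cov(I,Y)=\cov(I,X)\beta^*=\cov(I)\,C\beta^*$. As $\cov(I)$ is invertible by assumption, the defining equation of $\mathcal{B}$ reduces to $C\beta=C\beta^*$, i.e. $\mathcal{B}=\beta^*+\nullspace{C}$. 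Writing $S=\supp(\beta)$, membership $\beta\in\mathcal{B}$ is then equivalent to $C_S\beta_S=C_{\PA(Y)}\beta^*_{\PA(Y)}$, where $\beta_S\in\R^{\abs{S}}$ collects the nonzero entries.

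Next I would show that $\beta^*$ minimizes \eqref{eq:sparse_optim} using (A1) and (A2). Note $\|\beta^*\|_0=\abs{\PA(Y)}$. Suppose some $\beta\in\mathcal{B}$ had $\|\beta\|_0=\abs{S}<\abs{\PA(Y)}$. Then $\rank{C_S}\le\abs{S}<\abs{\PA(Y)}=\rank{C_{\PA(Y)}}$ by (A1), so both premises of (A2) hold (the strict rank inequality forces $\im{C_S}\ne\im{C_{\PA(Y)}}$, since equal images would give equal ranks). Hence (A2) yields $C_S w\ne C_{\PA(Y)}\beta^*_{\PA(Y)}$ for every $w\in\R^{\abs{S}}$; taking $w=\beta_S$ contradicts $C_S\beta_S=C_{\PA(Y)}\beta^*_{\PA(Y)}$. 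Thus $\|\beta\|_0\ge\|\beta^*\|_0$ for all $\beta\in\mathcal{B}$, so $\beta^*$ is a solution.

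For uniqueness under (A3), observe that by the previous step every minimizer $\beta$ has $\abs{S}=\abs{\PA(Y)}$. If $S\ne\PA(Y)$, then (A3) gives $\im{C_S}\ne\im{C_{\PA(Y)}}$, and since $\rank{C_S}\le\abs{S}=\rank{C_{\PA(Y)}}$ the hypotheses of (A2) again apply and produce the same contradiction with $\beta\in\mathcal{B}$. Hence $S=\PA(Y)$, so that $C_{\PA(Y)}\beta_{\PA(Y)}=C_{\PA(Y)}\beta^*_{\PA(Y)}$; because $C_{\PA(Y)}$ has full column rank by (A1) it is injective, giving $\beta_{\PA(Y)}=\beta^*_{\PA(Y)}$ and therefore $\beta=\beta^*$.

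The genuinely substantive step is the reduction in the first paragraph: once the moment condition is rewritten as $C\beta=C\beta^*$, assumptions (A1)--(A3) are tailored precisely to the counting argument and the remainder is bookkeeping. The main points to watch are the careful passage between a vector $\beta$ and its restriction $\beta_S$ to its support (so that $C\beta=C_S\beta_S$), and verifying in the last two paragraphs that \emph{both} premises of (A2)---the rank inequality and the image inequality---are in fact met before invoking it.
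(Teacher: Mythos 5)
Your proof is correct and follows essentially the same route as the paper's: reduce the moment condition to $C_S\beta_S = C_{\PA(Y)}\beta^*_{\PA(Y)}$ via the covariance computations, then use (A1)--(A3) to rule out supports that are smaller than $\PA(Y)$ or of equal size but different. Your final step, using the full column rank of $C_{\PA(Y)}$ from (A1) to handle a competing solution supported exactly on $\PA(Y)$, is in fact slightly more explicit than the paper's uniqueness argument, which treats only the case $S\neq\PA(Y)$ and leaves that last injectivity observation implicit.
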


\begin{proof}
  We use the notation $\xi^X\coloneqq h(H, \epsilon^X)$ and
  $\xi^Y\coloneqq g(H, \epsilon^X)$. Then \eqref{eq:scm} and the
  assumption of joint independence of $I$, $\xi^X$, and $\xi^Y$ imply
  that
  \begin{align}
    \cov[I, X]
    &=\cov\left[I, (\vI-B)^{-1}(AI+\xi^X)\right]\nonumber\\
    &=\cov[I]A^{\top}(\vI-B)^{-\top}\label{eq:covXXpart}.
  \end{align}
  Similarly, we get that
  \begin{align}
    \cov[I, Y]
    &=\cov\left[I, (AI+\xi^X)^{\top}(\vI-B)^{-\top}\beta^*+\xi_Y\right]\nonumber\\
    &=\cov\left[I, {\beta^*}^{\top}(\vI-B)^{-1}(AI+\xi^X)+\xi_Y\right]\nonumber\\
    &=\cov[I]A^{\top}(\vI-B)^{-\top}\beta^*.\label{eq:covXYpart}
  \end{align}
  Hence, for any $\ti{\beta}\in\mathcal{B}$, using the definition of
  $\mathcal{B}$ and combining \eqref{eq:covXXpart} and
  \eqref{eq:covXYpart} we get that
  \begin{equation*}
    \cov[I]C\ti{\beta}=\cov[I]C\beta^*.
  \end{equation*}
  Here, we used the definition of $C$ in \eqref{eq:Cmatrix}. 
  As $\cov[I]$ is invertible, we get
  \begin{equation}
    \label{eq:reduced_moment_cond_non_sparse}
    C\ti{\beta}=C\beta^*,
  \end{equation}
  Furthermore, it holds for\footnote{The support of a vector
      is defined as the set of indices of non-zero elements.}
    $S=\operatorname{supp}(\ti{\beta})$ that
  \begin{equation}
    \label{eq:reduced_moment_cond}
    C_S\ti{\beta}_{S}=C_{\PA(Y)}\beta^*_{\PA(Y)}.
  \end{equation}

  We now prove the first part of the theorem. Assume (A1) and (A2) are
  satisfied. We want to show that
  \begin{equation}
    \label{eq:optimization_argmin}
    \beta^*\in\argmin_{\beta\in\mathcal{B}}\norm{\beta}_0.
  \end{equation}
  Since $\beta^*\in\mathcal{B}$, it is sufficient to show that for all
  $\ti{\beta}\in\mathcal{B}$ it holds that
  $\norm{\ti{\beta}}_0\geq \abs{\PA(Y)}$. To this
  end, fix $\ti{\beta}\in\mathcal{B}$ and set
  $S=\operatorname{supp}(\ti{\beta})$. For the sake of contradiction
  assume $\abs{S}<\abs{\PA(Y)}$, then using (A1) we get that
  \begin{align*}
    \rank{C_{\PA(Y)}}&=
                       \dimension(\im{C_{\PA(Y)}})=\abs{\PA(Y)}
    \\ &> S \geq \dimension(\im{C_S})
         =\rank{C_{S}}.
  \end{align*}
  This implies $ \rank{C_{\PA(Y)}} \geq \rank{C_{S}}$ and
  $\im{C_{\PA(Y)}} \neq \im{C_{S}}$. Thus, by (A2), this contradicts
  \eqref{eq:reduced_moment_cond}. This completes the first part of
  the proof.

  Next, we prove the second part of the theorem. Assume that (A1),
  (A2) and (A3) are satisfied. By the previous part of the proof, we
  have seen that $\beta^*$ satisfies \eqref{eq:optimization_argmin}.
  It therefore only remains to show that there is no other
  solution. Assume for the sake of contradiction that there exists
  $\ti{\beta}\in\mathcal{B}$ with $S:=\operatorname{supp}(\ti{\beta})$
  such that $\abs{S}=\abs{\PA(Y)}$ and $S\neq\PA(Y)$. Then by (A3) we
  have $\im{C_{S}}\neq\im{C_{\PA(Y)}}$.  By (A1) it holds that
  \begin{equation*}
    \rank{C_{\PA(Y)}}=\abs{\PA(Y)}=\abs{S}\geq\rank{C_S}.
  \end{equation*}
  Hence, together with the condition $\im{C_{S}}\neq\im{C_{\PA(Y)}}$
  we can use (A2) to get a contradiction to
  \eqref{eq:reduced_moment_cond}.  This completes the proof of
  Theorem~\ref{thm:sparse_identifiability}.
\end{proof}

Section~\ref{sec:subset_identifiability} shows how one can identify a
subset of the causal parents under even milder
conditions. Remark~\ref{rem:children} discusses the case where the
covariates can also be descendants of $Y$.

\section{Graphical Characterization} \label{sec:graphchar}

We now formulate the identifiability result from
Section~\ref{sec:identifsparse} in graphical terms.  Suppose we are
given a data generating process of the form~\eqref{eq:scm} with
corresponding graph $\mathcal{G}$ (as described in
Section~\ref{sec:graphs}), which in this section is assumed to be
acyclic.  The parents of $Y$ are denoted by $\PA(Y)$ and correspond to
the non-zero entries of $\beta^*$. Moreover, for any set
$S\subseteq\{1,\ldots,d\}$, we define the set of all
\emph{intervention ancestors} of variables in $S$ as
$$
\AN_I[S]\coloneqq\{j\in\{1,\ldots,m\}\,\vert\, j \in
\AN(S)\}.\footnote{$\AN(S)$ denotes the ancestor set of
    $S$ consisting of all nodes with a directed path to a node in
    $S$. Throughout the paper, we use the convention that a node is
    contained in the set of its ancestors.}
$$ 
This set contains the instrument nodes that are ancestors of $S$.

We can now state the following graphical assumptions.
\begin{enumerate}
\item[(B1)] There are at least $|\PA(Y)|$ disjoint directed paths (not
  sharing any node) from $I$ to $\PA(Y)$.

\item[(B2)] The non-zero coefficients of the causal coefficient
  $\beta^*_{\PA(Y)}\in\R^{\abs{\PA(Y)}}$ and the non-zero entries of
  $A$ and $B$ are randomly drawn from a distribution $\mu$ which is
  absolutely continuous with respect to Lebesgue measure (and are
  independent of the other variables).
\item[(B3)] For all $S\subseteq\{1,\ldots,d\}$ with
  $\abs{S}=\abs{\PA(Y)}$ and $S\neq\PA(Y)$ at least one of the
  following conditions is satisfied
  \begin{itemize}
  \item[(i)] 
    $\AN_I[S]\neq \AN_I[\PA(Y)]$.
  \item[(ii)] The smallest set $T$ of nodes such that all directed
    paths from $I$ to $\PA(Y)$ and from $I$ to $S$ go through $T$ is
    of size at least $|\PA(Y)|+1$.
  \end{itemize}
\end{enumerate}
We will see in Theorem~\ref{thm:sparse_identifiability_graph} below
that the causal effect becomes identifiable if (B1)--(B3) hold.  Let
us discuss these assumptions using a few examples.
\begin{example}\label{ex:graph}
\begin{itemize}
\item[(i)] The example from Example~\ref{ex:1} and
  Figure~\ref{fig:ex1} is discussed in Figure~\ref{fig:ex1-marg} in
  Appendix~\ref{app:graphs}.
\item[(ii)] Figure~\ref{fig:ex2-marg} contains another identifiable
  example.
\item[(iii)] We now come back to the example graphs shown in
  Figure~\ref{fig:ex3-marg}.  Consider an SCM with the graph structure
  shown including the dashed edges.  (B1) is violated, as the effect
  of the four instruments is `channelled' through three
  variables. Indeed, here, the causal effect from
  $(X^1, X^2, X^3, X^4)$ on $Y$ is in general non-identifiable -- even
  though all instruments are connected to all causal parents of $Y$
  (the rank of $C_{\PA(Y)}$ is three and therefore too small to
  identify $\beta^*$).
\item[(iv)] Consider an SCM with the graph structure shown in
  Figure~\ref{fig:ex3-marg} (dashed edges not included).  Here, (B1)
  holds.  (B3) is satisfied, too: e.g., for the set
  $S := \{X^3, X^4\}$, we have
  $\AN_I[\{X^3, X^4\}] = \AN_I[\{X^1, X^2\}]$, so (B3) (i) is
  violated, but (B3) (ii) holds (which implies
  $\im{C_{S}}\neq\im{C_{\PA(Y)}}$, see proof of
  Theorem~\ref{thm:sparse_identifiability_graph}): there is no set of
  size two such that all directed paths go through this set (note that
  $|\AN(\{X^3, X^4\})|=3$, for example).  Thus, if additionally (B2)
  holds, then $\beta^*$ is identifiable.
\end{itemize}
\end{example}

A graphical marginalization of graphs similar to the latent projection
\citep{richardson2003markov, Ver93} may help to gain further intuition
about the assumptions.  Consider a subset
$V \subseteq \{1, \ldots, d\}$ of the covariates.  The marginalized
graph $\mathcal{G}^V$ is then constructed from $\mathcal{G}$ by the
following procedure: (i) $\mathcal{G}^V$ consists of all instrument
nodes $k$ from $\mathcal{G}$, all predictor nodes $X^j$ from
$\mathcal{G}$ for which $j\in V$, and node $Y$; (ii) $\mathcal{G}^V$
contains a directed edge from $X^i$ to $X^j$ if and only if
$\mathcal{G}$ contains a directed path from $X^i$ to $X^j$ that does
not have any intermediate nodes in $V$ (e.g., because there are no
intermediate nodes); (iii) $\mathcal{G}^V$ contains a directed edge
from $k$ to $X^j$ if and only if $\mathcal{G}$ contains a directed
path from $k$ to $X^j$ that does not have any intermediate nodes in
$V$ (e.g., because there are no intermediate nodes).  The set
$\IPA^V[U]$ denotes the $\mathcal{G}^V$-parents of $U$ that are
intervention nodes: $\IPA^V[U] := \PA_{\mathcal{G}^V} \cap I$.
Figure~\ref{fig:ex1-marg}, in Appendix~\ref{app:graphs}, shows the
marginalized graph corresponding to Example~\ref{ex:1} and
Figure~\ref{fig:ex1}.

(B1) ensures that there is sufficient heterogeneity coming from
instruments. In particular, there need to be as many instruments as
parents of $Y$ and for all $S \subseteq \PA(Y)$, we have
$|\PA_I^{\PA(Y)}[S]| \geq |S|$. In particular, this implies that for
all $k \in \PA(Y)$, we have $\PA_I^{\PA(Y)}[k] \neq \emptyset$.  In
general, however, this is not sufficient for identifiability (see
Section~\ref{sec:experiments}).
\begin{figure}[t]
  \centerline{
  \begin{tikzpicture}[scale=1]
    \tikzstyle{VertexStyle} = [shape = circle, minimum width = 2.5em, fill=lightgray]
    \Vertex[Math,L=Y,x=0,y=0]{Y}
    \Vertex[Math,L=X^1,x=-2,y=1]{X1}
    \Vertex[Math,L=X^2,x=-2,y=-1]{X2}
    \Vertex[Math,L=X^3,x=-4,y=2]{X3}
    \Vertex[Math,L=X^4,x=-4,y=0]{X4}
    \Vertex[Math,L=X^5,x=-4,y=-2]{X5}
    \tikzstyle{VertexStyle} = [draw, shape = rectangle, minimum
    width=2em]
    \Vertex[Math,L=1,x=-6.0,y=2]{1}
    \Vertex[Math,L=2,x=-6.0,y=0]{2}
    \Vertex[Math,L=3,x=-6.0,y=-2]{3}
    \tikzset{EdgeStyle/.append style = {-Latex, line width=1}}
    \Edge(1)(X3)
    \Edge(2)(X4)
    \Edge(3)(X5)
    \Edge(X1)(Y)
    \Edge(X2)(Y)
    \Edge(X3)(X1)
    \Edge(X4)(X1)
    \Edge(X4)(X2)
    \Edge(X5)(X2)
  \end{tikzpicture}}
\caption{Graphical representation of an example SCM, as described in
  Section~\ref{sec:graphs} (there may be hidden variables between all
  predictor variables). Here, (B1) holds as there are two distinct
  paths from $I$ to $\PA(Y)$.  (B3) is satisfied, too: $\{X^1, X^5\}$
  and $\{X^3, X^2\}$ are the only sets $S$ violating (B3) (i) (because
  $\AN_I[\{X^1, X^5\}] = \AN_I[\{X^1, X^2\}] = \AN_I[\{X^3, X^2\}]$),
  but they satisfy (B3) (ii).  Thus, given (B2), the causal effect
  from $(X^1, X^2)$ on $Y$ is identifiable -- even though there are
  less instruments than covariates.  }\label{fig:ex2-marg}
\end{figure}
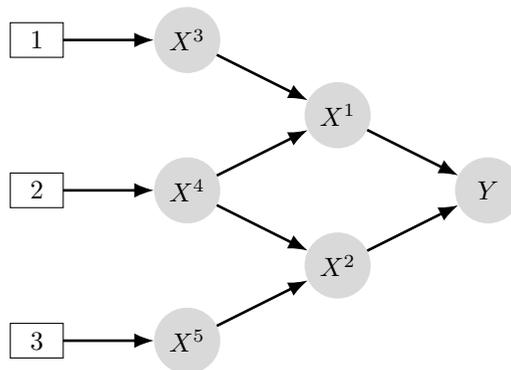

We can now state the graphical version of Theorem~\ref{thm:sparse_identifiability}.  
\begin{theorem}[Identifiability of sparse causal coefficients (graph version)]
  \label{thm:sparse_identifiability_graph}
  Consider a data generating process of the form~\eqref{eq:scm}.  If
  (B1) and (B2) hold, then (A1) and (A2) hold $\mu$-almost surely.
  Moreover, if, in addition, (B3) holds, then (A3) holds $\mu$-almost
  surely.
\end{theorem}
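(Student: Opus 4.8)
The plan is to exploit that every entry $C_{i,j}$ is a polynomial in the (random) non-zero entries of $A$ and $B$: by the definition of $C$ it equals the sum, over all directed paths from instrument node $i$ to $X^j$, of the products of the traversed edge coefficients. Since (B2) draws these coefficients together with $\beta^*_{\PA(Y)}$ from a distribution absolutely continuous with respect to Lebesgue measure, any event that forces a fixed non-zero polynomial in these parameters to vanish, or forces $\beta^*_{\PA(Y)}$ into a proper linear subspace, has $\mu$-probability zero. The whole argument therefore reduces to showing that each of (A1), (A2), (A3) fails only on such a null set; since there are finitely many subsets $S$ to consider, a finite union of null sets is still null. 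The bridge between graph and algebra is the Lindstr\"{o}m--Gessel--Viennot lemma together with Menger's theorem: because the graph is acyclic, for any instrument subset $I'$ and node subset $W$ with $|I'|=|W|$, the minor $\det(C_{I',W})$ is a signed sum of monomials indexed by vertex-disjoint path systems from $I'$ to $W$, and distinct such systems contribute distinct monomials (a vertex-disjoint path system is recovered from its edge set), so no cancellation occurs. Hence $\det(C_{I',W})$ is a non-zero polynomial iff such a system exists, and the \emph{generic} rank of $C_W$ equals the maximum number of vertex-disjoint directed paths from $I$ to $W$, which by Menger equals the minimum size of a vertex set separating $I$ from $W$.

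For (A1), (B1) supplies $|\PA(Y)|$ vertex-disjoint paths from $I$ to $\PA(Y)$; these originate at $|\PA(Y)|$ distinct instruments $I'$ and terminate at all of $\PA(Y)$, so $\det(C_{I',\PA(Y)})$ is a non-zero polynomial. The generic rank of $C_{\PA(Y)}$ is therefore $|\PA(Y)|$ (it cannot exceed the number of columns), and by absolute continuity this maximal minor is non-zero $\mu$-almost surely, giving $\rank{C_{\PA(Y)}}=|\PA(Y)|$.

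For (A2) I would condition on $(A,B)$ and use only the randomness of $\beta^*_{\PA(Y)}$. Fix any $S$ satisfying the hypotheses of (A2), that is $\rank{C_S}\le\rank{C_{\PA(Y)}}$ and $\im{C_S}\neq\im{C_{\PA(Y)}}$. Then $W_S:=\im{C_S}\cap\im{C_{\PA(Y)}}$ is a \emph{proper} subspace of $\im{C_{\PA(Y)}}$: if it equaled $\im{C_{\PA(Y)}}$ then $\im{C_{\PA(Y)}}\subseteq\im{C_S}$, which together with $\rank{C_S}\le\rank{C_{\PA(Y)}}$ forces $\im{C_S}=\im{C_{\PA(Y)}}$, a contradiction. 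The event $C_{\PA(Y)}\beta^*_{\PA(Y)}\in\im{C_S}$ is thus the event that $\beta^*_{\PA(Y)}$ lies in the preimage $C_{\PA(Y)}^{-1}(W_S)$, a proper linear subspace of $\R^{|\PA(Y)|}$; by (B2) this has probability zero. Taking the union over the finitely many relevant $S$ and integrating out $(A,B)$ shows (A2) holds $\mu$-almost surely (this step in fact only uses the random $\beta^*$ from (B2)).

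For (A3), fix $S$ with $|S|=|\PA(Y)|$ and $S\neq\PA(Y)$; I must rule out $\im{C_S}=\im{C_{\PA(Y)}}$. If (B3)(i) holds, then $\AN_I[S]\neq\AN_I[\PA(Y)]$, so some instrument $k$ is ancestral to exactly one of the two sets; generically the row $k$ of one path matrix is non-zero while row $k$ of the other is identically zero (no path makes that entry exactly $0$), so one image contains a vector with non-zero $k$-th coordinate and the other does not, whence the images differ $\mu$-almost surely. If instead (B3)(ii) holds, suppose for contradiction that $\im{C_S}=\im{C_{\PA(Y)}}$; then $\im{C_{\PA(Y)\cup S}}=\im{C_{\PA(Y)}}$ and so $\rank{C_{\PA(Y)\cup S}}\le|\PA(Y)|$. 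But the set $T$ in (B3)(ii) is precisely a minimum vertex separator of $I$ from $\PA(Y)\cup S$, so by Menger there are at least $|T|\ge|\PA(Y)|+1$ vertex-disjoint paths from $I$ to $\PA(Y)\cup S$, and by the Lindstr\"{o}m--Gessel--Viennot argument the generic rank of $C_{\PA(Y)\cup S}$ is at least $|\PA(Y)|+1$; hence $\rank{C_{\PA(Y)\cup S}}\le|\PA(Y)|$ occurs only on the null set where a fixed non-zero $(|\PA(Y)|+1)$-minor vanishes. Either way $\im{C_S}\neq\im{C_{\PA(Y)}}$ outside a null set, establishing (A3). The main obstacle is exactly this last chain for (B3)(ii): one has to notice that equal images cap the rank of the combined matrix $C_{\PA(Y)\cup S}$, identify $T$ as a minimum $I$-to-$(\PA(Y)\cup S)$ separator, and then correctly invoke the vertex version of Menger together with the no-cancellation form of Lindstr\"{o}m--Gessel--Viennot, paying attention to how the path endpoints in $I$ and $\PA(Y)\cup S$ are treated.
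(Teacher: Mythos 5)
Your proposal is correct, and for the rank-genericity parts it takes a genuinely different route from the paper. The paper proves (A1) and the (B3)(ii) case of (A3) by an explicit marginalization scheme: it substitutes the structural equations of the nodes in $\AN(\PA(Y))\cap\DE(I)$ one at a time along a causal order, obtaining a factorization $C_{\cdot,\PA(Y)}^\top = C_1\cdots C_f$ in which each factor consists of an identity block plus one row of fresh edge coefficients, and then argues that each factor (and hence the product) has rank at least $\abs{\PA(Y)}$ $\mu$-almost surely; the same scheme applied to $S\cup\PA(Y)$ yields rank at least $\abs{\PA(Y)}+1$ under (B3)(ii). You instead observe that each entry of $C$ is a cancellation-free path polynomial, invoke the Lindstr\"om--Gessel--Viennot lemma to identify maximal minors of $C_W$ with vertex-disjoint path systems (distinct systems giving distinct monomials, so the minor is a non-zero polynomial exactly when such a system exists), and use Menger's theorem to convert the separator condition in (B3)(ii) into the existence of $\abs{\PA(Y)}+1$ disjoint paths. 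Both arguments prove the same generic-rank facts; yours is more modular and arguably crisper, since the genericity reduces to the non-vanishing of a single explicit polynomial and the delicate bookkeeping of the iterated substitution (in particular, why the \emph{product} of the $C_k$ retains full rank) is replaced by a standard combinatorial lemma, at the price of importing LGV and Menger and of relying on the acyclicity assumed in Section~\ref{sec:graphchar}. Your treatments of (A2) (conditioning on $(A,B)$ and pushing the randomness of $\beta^*_{\PA(Y)}$ through $C_{\PA(Y)}$ into a proper subspace, exactly the content of Proposition~\ref{prop:random_proj}) and of the (B3)(i) case (a row of $C_{\PA(Y)}$ that is identically zero while the corresponding row of $C_S$ is generically non-zero) coincide with the paper's. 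You also correctly flag the one point that needs care in the (B3)(ii) chain, namely that the separator $T$ may contain endpoints, so one must use the set version of Menger in which paths are fully (not just internally) vertex-disjoint and, after truncation, terminate at distinct nodes of $S\cup\PA(Y)$.
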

Together with Theorem~\ref{thm:sparse_identifiability} this implies
that under (B1) and (B2), $\beta^*$ is $\mu$-almost surely a solution
to \eqref{eq:sparse_optim} and (B1), (B2), and (B3), it is
$\mu$-almost surely the unique solution.

\begin{proof}
  Regarding (A1): With respect to (A1), consider first the
  marginalization of model~\eqref{eq:scm} over $\PA(Y)$.  To do so, we
  repeatedly substitute $X^j$, $j \in \{1, \ldots, d\}$ with its
  assignment, that is, the corresponding right-hand side
  of~\eqref{eq:scm} and obtain
  \begin{align} \label{eq:margpar}
    \begin{array}{ll}
      X^{\PA(Y)} := 
      {C^\times}^\top
      I + h^\times(H,\epsilon^X).
    \end{array}
  \end{align}
  We then have
  \begin{equation} \label{eq:ctimes}
    C_{\cdot, \PA(Y)} = C^\times,
  \end{equation}
  where $C_{\cdot, \PA(Y)}$ is the matrix constructed from the columns
  of $C$ corresponding to $\PA(Y)$.  Equality~\eqref{eq:ctimes} holds
  by construction: The element of $C_{\cdot, \PA(Y)}$ in row $i$ and
  the column corresponding to $X^j \in \PA(Y)$ equals the $i$-th
  component of the total causal effect from $I$ on $X^j$; this is
  exactly the same in the marginalized model~\eqref{eq:margpar}.  We
  now argue that ${C^\times}$ has full rank $\mu$-almost surely.  To
  do so, we perform a more careful replacement scheme that allows us
  to write
  \begin{equation}
    \label{eq:replsc}    
    X^{\PA(Y)} = C_1 \cdot C_2 \cdot \ldots \cdot C_f \cdot I + h^\times(H,\varepsilon^X).
  \end{equation}
  It then holds that
  $C^\times = (C_1 \cdot C_2 \cdot \ldots \cdot C_f)^\top$.  As a
  first step of the replacement scheme, consider all $X$ nodes on
  directed paths from $I$ to $\PA(Y)$, that is
  $W := \AN(\PA(Y)) \cap \DE(I)$. Among these nodes we consider a
  causal ordering on the induced graph, that is, we choose
  $i_1, \ldots, i_f$ such that for all $k, \ell \in \{1, \ldots, f\}$
  with $k < \ell$, we have
  $X^{i_{\ell}} \in \ND_{\mathcal{G}_{W}}(X^{i_k})$ ($\ND$ denotes the
  "non-descendants"), where $\mathcal{G}_{W}$ is the subgraph of
  $\mathcal{G}$ over nodes in $W$.  We now start from the equation
  $X^{\PA(Y)}=X^{\PA(Y)}$ and replace, on the right-hand side,
  $X^{i_{1}}$ by its structural equation, yielding
  $$
  X^{\PA(Y)} = C_1 \cdot X^{\PA_1} + h_1(H, W^c, \varepsilon^{X_{i_1}}),
  $$ 
  where
  $\PA_1 = \PA(Y) \setminus \{ X^{i_1} \} \cup
  \PA_{\mathcal{G}_W}(X^{i_1})$ and the $h_1$ term collects error
  terms and variables not in $W$.  $C_1$ is a matrix with dimension
  $|\PA(Y)| \times |\PA_1|$.  We did not replace the variables in
  $\PA(Y) \setminus \{ X^{i_1} \}$, the corresponding submatrix in
  $C_1$ is the identity.  All directed paths from $I$ to
  $\PA(Y)$ go through $\PA_1$. Condition~(B1) therefore implies
  $|\PA_1| \geq |\PA_1(Y)|$.  The row corresponding to $X^{i_1}$
  contains the path coefficients from $\PA(X^{i_1})$ to $X^{i_1}$,
  which are $\mu$-almost surely non-zero. Thus, $C_1$ has $\mu$-almost
  surely rank $|\PA(Y)|$.  We now repeatedly (for
  $k \in \{2, \ldots, \ell\}$) substitute the variable $X^{i_k}$ in
  $X^{\PA_{k-1}}$ with its structural equation yielding
  $$
  X^{\PA(Y)} = C_1 \cdot C_2 \cdot \ldots \cdot C_{k} \cdot
  X^{\PA_{k}} + h_k(H, W^c, \varepsilon^{X_{i_1}}),
  $$ 
  where
  $\PA_k = \PA_{k-1} \setminus \{ X^{i_k} \} \cup
  \PA_{\mathcal{G}_W}(X^{i_k})$ and $C_{k}$ contains an identity
  matrix for the submatrix, corresponding to $\PA_k \cap \PA_{k-1}$
  and in the row corresponding to $X^{i_k}$ a vector of coefficients.
  With the same arguments as above, we have that
  $|\PA_k| \geq |\PA(Y)|$. (Indeed, otherwise, all directed path would
  go through a set of nodes of size strictly smaller than $|\PA(Y)|$.)
  Furthermore, $C_k$ has rank at least $|\PA(Y)|$. (Indeed, if
  $|\PA_{k-1}| > |\PA(Y)|$, then $C_k$ contains a
  $|\PA(Y)|\times |\PA(Y)|$ submatrix that is equal to the identity;
  if $|\PA_{k-1}| = |\PA(Y)|$, then $C_k$ contains a
  $(|\PA(Y)| -1)\times (|\PA(Y)|-1)$ submatrix that is equal to the
  identity, $\PA_{k} \setminus \PA_{k-1} \neq \emptyset$, and the
  entry corresponding to one of the new parents will be non-zero
  $\mu$-almost surely.)  As $W^c$ can be written as a function of
  $\epsilon^X$, the above replacement scheme yields the desired
  form~\eqref{eq:replsc}.  Since
  $C^\times = (C_1 \cdot C_2 \cdot \ldots \cdot C_f)^\top$ and all
  non-zero entries are independent realizations from $\mu$, this
  proves that $C^{\times}$ is $\mu$-almost surely of rank at least
  $|\PA(Y)|$, that is, (A1) holds $\mu$-almost surely.

  Regarding (A2): Proposition~\ref{prop:random_proj} shows that (A2)
  holds $\mu$-almost surely.

  Regarding (A3): Consider a set $S\subseteq\{1,\ldots,d\}$ with
  $\abs{S}=\abs{\PA(Y)}$ and $S\neq\PA(Y)$.  First, we argue that (B3)
  (i) implies (A3). To see this, assume
  $\AN_I[S]\neq\AN_{I}[\PA(Y)]$.  Without loss of generality assume
  that there is an $i^*$ such that
  $i^* \in \AN_I[S] \setminus \AN_{I}[\PA(Y)]$.  This implies that the
  $i^*$th row of $C_{\PA(Y)}$ is entirely zero.  Moreover, there is a
  node $X^j \in S$ such that $i^* \in \AN_I[\{j\}]$, and therefore the
  entry of the $i^*$th row of $C_S$ that corresponds to $X^j$ must be
  non-zero $\mu$-almost surely ($C_{i,j}$ corresponds to the $i$-th
  component of the total causal effect from $I$ on $X^j$ in the SCM
  given in \eqref{eq:scm}). It therefore follows that $\mu$-almost
  surely it holds that
  \begin{equation}
    \label{eq:neededA2_part2}
    \im{C_{S}}\neq\im{C_{\PA(Y)}}.
  \end{equation}
    
  Now consider a set $S$ and assume that (B3) (i) does not hold but
  (B3) (ii) holds. To argue that (A3) holds, we proceed similarly as
  in the part of the proof showing that $(B1)$ implies $(A1)$.  We
  consider the graph $\mathcal{G}$ over the nodes
  $W_{S \cup \PA(Y)}:=\AN(\PA(Y) \cup S) \cap \DE(I)$.  As before, we
  construct a causal order and substitute the nodes one after each
  other. This time, we obtain the equation
  $$
  X^{S \cup \PA(Y)} = C_1 \cdot C_2 \cdot \ldots \cdot C_{f'} \cdot I + h^\times(H, \varepsilon^X)
  $$
  and
  $C_{\cdot, S\cup \PA(Y)} = (C_1 \cdot C_2 \cdot \ldots \cdot
  C_{f'})^\top$. With the same argument as above, we conclude that
  $\mu$-almost surely, the rank of $C_{\cdot, S\cup \PA(Y)}$ is
  strictly larger than $|\PA(Y)|$.  This implies that
  $\im{C_{S}}\neq\im{C_{\PA(Y)}}$ $\mu$-almost surely.  (Indeed, if
  $\im{C_{S}}=\im{C_{\PA(Y)}}$, then each column of $C_{S}$ can be
  written as a linear combination of the columns of $C_{\PA(Y)}$,
  which implies that $C_{S \cup \PA(Y)}$ is of rank at most
  $|\PA(Y)|$.)  This completes the proof of
  Theorem~\ref{thm:sparse_identifiability_graph}.
\end{proof}

\section{Algorithm and Consistency} \label{sec:method}

The theoretical identifiability results from the previous sections
highlight that the causal coefficient $\beta^*$ can be identifiable
even in cases that are considered non-identifiable in classical IV
literature.  We now propose an estimation procedure called \spaceIV
(\textbf{spa}rse \textbf{c}ausal \textbf{e}ffect \textbf{IV}) that
allows us to infer $\beta^*$ from a finite data set
$(X, I, Y)\in\R^{n\times d} \times\R^{n\times m}\times\R^n$.  The
procedure is based on the optimization problem
$\min_{\beta\in\mathcal{B}}\,\norm{\beta}_0$. It iterates over the
sparsity level $s$ and searches over all subsets
$S\subseteq\{1,\ldots,S\}$ of predictors for that sparsity level to
check whether there is a $\beta\in\R^d$ with $\supp(\beta)=S$ that
solves \eqref{eq:moment_eq}.  We motivate our estimator by considering
a hypothesis test. To obtain finite sample guarantees for the test, we
assume that the error term is normally distributed (to obtain
asymptotic results \citep{anderson1950asymptotic}, such assumptions
can be relaxed).

Let us consider a fixed sparsity level $s\in\{1,\ldots,d\}$ and the
null hypothesis
\begin{equation*}
    H_0(s):\quad
    \exists \beta\in\R^d \text{ with $\|\beta\|_0=s$ such that } \beta\in\mathcal{B}.
\end{equation*}
This hypothesis can be tested using the Anderson-Rubin test
\citep{anderson1949estimation}. Let
$P_I\coloneqq I(I^{\top}I)^{-1}I^{\top}$, then the Anderson-Rubin test
statistic is defined as
\begin{equation}
  \label{eq:AR_test_stat}
  T(\beta)\coloneqq\frac{(Y-X\beta)^{\top}P_I(Y-X\beta)}{(Y-X\beta)^{\top}(\vI-P_I)(Y-X\beta)}\frac{n-m}{m},
\end{equation}
and (still,  for Gaussian error variables) satisfies $T(\beta)\sim F_{n-m,m}$, i.e.\ an $F$ distribution with $n-m$ and $m$
degrees of freedom, 
for all
$\beta\in\mathcal{B}$,
see also \citet{Jakobsen2020}. It is known
\citep[e.g.,][]{dhrymes2012econometrics} that the limited maximum
likelihood estimator (LIML) minimizes this test statistic. For any set
$S\subseteq\{1,\ldots,d\}$, denote by
$\hat{\beta}_{\operatorname{LIML}}(S) \in \mathbb{R}^d$ the LIML
estimator based on the subset of predictors $X^S$ (adding zeros in the
other coordinates). It then holds for all $\beta\in\R^d$ with
$\supp(\beta)=S$ that
\begin{equation}
\label{eq:test_stat_prop}
    T(\hat{\beta}_{\operatorname{LIML}}(S))\leq T(\beta).
\end{equation}
Next, for each sparsity level $s\in\{1,\ldots,d\}$ define
\begin{equation}
    \hat{\beta}(s)\coloneqq \hat{\beta}_{\operatorname{LIML}}\left(\argmin_{S\subseteq \{1,\ldots,d\}: |S|= s} T(\hat{\beta}_{\operatorname{LIML}}(S))\right),
\end{equation}
which can be computed by iterating over all subsets with sparsity
level $s$. Then, by \eqref{eq:test_stat_prop}. the hypothesis test
$\phi_s:\R^{n\times d}\times\R^{n\times
  m}\times\R^n\rightarrow\{0,1\}$ defined by
\begin{equation*}
    \phi_s(X, I, Y)=\mathds{1}(T(\hat{\beta}(s))>F_{n-m,m}^{-1}(1-\alpha))
\end{equation*}
has valid level for the null hypothesis $H_0(s)$ (again, if the error
variables are Gaussian; otherwise it has uniform asymptotic
level given sufficient regularity).

Motivated by this test, we now define our estimator
\texttt{spaceIV}. It iterates over $s\in\{1,\ldots,s_{\max}\}$ and in
each step computes $\hat{\beta}(s)$ by exhaustively searching over all
subsets of size $s$. Then, either $\phi_s$ is accepted and \spaceIV
returns $\hat{\beta}_{\leq s_{\max}} := \hat{\beta}(s)$ as its final
estimator or it continues with $s+1$. If none of the tests are
accepted, the procedure outputs
$\hat{\beta}_{\leq s_{\max}} := \hat{\beta}(s_{\max})$ and a warning
indicating that the model assumptions may be violated. The detailed
procedure is presented in Algorithm~\ref{alg:sparse_iv}.
\begin{algorithm}[ht]
\caption{\spaceIV}
\label{alg:sparse_iv}
\SetAlgoLined
\KwIn{predictors $X\in\R^{n\times d}$, response $Y\in\R^{n}$, instruments $I\in\R^{n\times m}$, sparsity threshold $s_{\max}\in\{1,\ldots, d\}$, significance level $\alpha\in(0,1)$}

Initialize sparsity $s \leftarrow 0$ \\
Initialize test as rejected $\phi \leftarrow 1$ \\
\While{$s < s_{\max}$ and $\phi=1$}{
Update sparsity $s\leftarrow s + 1$\\
Set $\mathbf{S}_s$ to be all subsets in $\{1,\ldots,d\}$ of size $s$\\
\For{$S\in\mathbf{S}_s$}{
Compute LIML-estimator $\hat{\beta}_{\operatorname{LIML}}(S)$\\
Compute test statistic $T(\hat{\beta}_{\operatorname{LIML}}(S))$ in \eqref{eq:AR_test_stat}\\
}
Select $S_{\min}\leftarrow\argmin_{S\in\mathbf{S}_s} T(\hat{\beta}_{\operatorname{LIML}}(S))$\\
Set $\hat{\beta}(s) \leftarrow \hat{\beta}_{\operatorname{LIML}}(S_{\min})$\\
Test whether $H_0(s)$ can be rejected: $\phi\leftarrow \mathds{1}(T(\hat{\beta}(s))>F_{n-m,m}^{-1}(1-\alpha))$
}
Set $\hat{\beta}_{\leq s_{\max}} := \hat{\beta}(s)$

\KwOut{Final estimate $\hat{\beta}_{\leq s_{\max}}$ and test result $\phi$}
\end{algorithm}

The proposed \spaceIV estimator $\hat{\beta}_{\leq s_{\max}}$ satisfies the following guarantees.
\begin{theorem} \label{thm:cons} Consider i.i.d.\ data from a data
  generating process of the form~\eqref{eq:scm} for which
  $g(H, \epsilon^Y)$ is Gaussian, $(I, X, Y)$ has mean zero and (A1) and (A2) hold.  Let
  $s_{\max} \in \mathbb{N}$ be such that
  $s_{\max} \geq \|\beta^*\|_0$. Then, the following two statements
  hold.  (i) We have
  $$
  \lim_{n\rightarrow\infty} P(\|\hat{\beta}_{\leq s_{\max}}\|_0 = \|\beta^*\|_0) \geq 1-\alpha.
  $$
  (ii) If, in addition, (A3) holds, we have, 
  for all $\epsilon>0$ that
  $$
  \lim_{n\rightarrow\infty} P(\|\hat{\beta}_{\leq s_{\max}} - \beta^*\|_2 < \varepsilon) \geq 1-\alpha.
  $$
\end{theorem}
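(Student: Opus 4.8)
The plan is to track the sparsity level at which \spaceIV first fails to reject and to show that this level equals $s^\star:=\norm{\beta^*}_0=\abs{\PA(Y)}$ with the asserted probability. For a support $S$, write $\rho(S)$ for the population value of the minimized Anderson--Rubin ratio over coefficients supported in $S$, i.e.\ the smallest attainable value of (a suitably normalized) $\cov(I,Y-X^\top\beta)^\top\cov(I)^{-1}\cov(I,Y-X^\top\beta)$ subject to $\supp(\beta)\subseteq S$; note that $\rho(S)=0$ if and only if some $\beta\in\mathcal{B}$ is supported in $S$. By Theorem~\ref{thm:sparse_identifiability}, (A1) and (A2) force $s^\star$ to be the minimal $\norm{\cdot}_0$ over $\mathcal{B}$, and (A3) additionally makes $\PA(Y)$ the unique minimizing support.

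First I would rule out early stopping. Fix $s<s^\star$ and any $S$ with $\abs{S}=s$. Every $\beta$ supported in $S$ has $\norm{\beta}_0\le s<s^\star$, so no such $\beta$ lies in $\mathcal{B}$ and hence $\rho(S)>0$. By the law of large numbers and the continuous mapping theorem the empirical minimized ratio converges to $\rho(S)$, so the smallest-eigenvalue characterization of the LIML objective gives $T(\hat\beta_{\operatorname{LIML}}(S))=\big(\text{empirical ratio}\big)\cdot\tfrac{n-m}{m}\xrightarrow{P}\infty$. Minimizing over the finitely many size-$s$ subsets preserves divergence, so $T(\hat\beta(s))\xrightarrow{P}\infty$, while the critical value $F^{-1}_{n-m,m}(1-\alpha)$ stays bounded. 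Hence $\P(\phi_s=1)\to1$ for each $s<s^\star$, and a union bound over the finitely many such $s$ shows that \spaceIV passes all levels below $s^\star$ with probability tending to one.

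Next I would analyze level $s^\star$. Here $\PA(Y)$ is a feasible support and $\beta^*\in\mathcal{B}$, so \eqref{eq:test_stat_prop} yields $T(\hat\beta(s^\star))\le T(\hat\beta_{\operatorname{LIML}}(\PA(Y)))\le T(\beta^*)$, where $T(\beta^*)\sim F(n-m,m)$ exactly under the Gaussian assumption on $g(H,\epsilon^Y)$. The core step is to show that the test at the true sparsity accepts with probability converging to $1-\alpha$: using consistency of LIML on the correct support together with the limiting null law of the profiled Anderson--Rubin statistic, one argues $\P(\phi_{s^\star}=0)\to1-\alpha$. Combining with the previous paragraph, \spaceIV stops exactly at $s^\star$ with probability $\to1-\alpha$; on this event the selected support $S_{\min}$ must satisfy $\rho(S_{\min})=0$ (otherwise the statistic would diverge), so it carries a solution $\beta'\in\mathcal{B}$ with $\norm{\beta'}_0=s^\star$ (a smaller support would contradict minimality of $s^\star$), and consistency of LIML makes all $s^\star$ estimated coefficients nonzero with probability tending to one. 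Thus $\norm{\hat\beta_{\le s_{\max}}}_0=s^\star=\norm{\beta^*}_0$, which is~(i).

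For~(ii) I would add (A3). By uniqueness, every size-$s^\star$ support $S\neq\PA(Y)$ has $\rho(S)>0$, so $T(\hat\beta_{\operatorname{LIML}}(S))\xrightarrow{P}\infty$; therefore, on the stopping event, the minimizing subset $S_{\min}$ equals $\PA(Y)$ with probability tending to one. Consistency of the (just/over-identified) IV estimator on the correct support then gives $\hat\beta_{\operatorname{LIML}}(\PA(Y))\xrightarrow{P}\beta^*$, and the continuous mapping theorem yields $\P(\norm{\hat\beta_{\le s_{\max}}-\beta^*}_2<\varepsilon)\to1-\alpha$. I expect the main obstacle to be the level computation at $s^\star$: one must pin down the limiting distribution of the profiled Anderson--Rubin statistic and relate its upper tail to the $F$-critical value so that the acceptance probability is exactly $1-\alpha$ rather than merely at least $1-\alpha$. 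By contrast, the divergence at small $s$ and the subset-selection argument under (A3) are comparatively routine once the population ratio $\rho(\cdot)$ is controlled.
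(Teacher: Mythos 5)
Your proposal follows essentially the same route as the paper's proof: rule out all sparsity levels $s<\|\beta^*\|_0$ by showing the profiled Anderson--Rubin statistic diverges (so $\P(\phi_s=1)\to 1$), reduce the event $\{\|\hat{\beta}_{\leq s_{\max}}\|_0=\|\beta^*\|_0\}$ to $\{\phi_{\|\beta^*\|_0}=0\}$, and for (ii) use (A3) to make every competing size-$\|\beta^*\|_0$ support infeasible so that $S_{\min}=\PA(Y)$ with probability tending to one, followed by consistency of LIML on the correct support. The only technical difference in the divergence step is that you argue via the law of large numbers applied to the population ratio $\rho(S)$ for each of the finitely many supports, whereas the paper conditions on $(X,I)$, uses the non-central $\chi^2(1,\lambda)$ representation of $T(\beta)$, and bounds the non-centrality parameter $n\|\widehat{\cov}(I,X)(\beta^*-\beta)\|_2^2/\sigma^2$ uniformly from below over $\{\beta:\|\beta\|_0=s\}$; the paper's route handles the infimum over a continuum of $\beta$'s directly, while yours handles it support-by-support and then profiles within each support via the LIML characterization. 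Both are serviceable.

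The step you flag as the main obstacle --- showing that the acceptance probability at the true sparsity level converges to \emph{exactly} $1-\alpha$ rather than merely being bounded below by $1-\alpha$ --- is precisely the step the paper also treats most loosely: its proof concludes $\lim_n \P(\phi_{\|\beta^*\|_0}=0)=1-\alpha$ by appealing only to ``the fact that $\phi_s$ has valid level,'' which by itself yields only $\P(\phi_{\|\beta^*\|_0}=0)\geq 1-\alpha$, since $T(\hat{\beta}(\|\beta^*\|_0))\leq T(\hat{\beta}_{\operatorname{LIML}}(\PA(Y)))\leq T(\beta^*)$ and only $T(\beta^*)$ has the exact $F$ law. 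Pinning down the limiting distribution of the minimized (profiled) statistic, as you propose, is indeed what would be needed to upgrade the inequality to an equality; neither you nor the paper carries this out. So your proposal is faithful to the paper's argument, and your honest identification of the weak point is accurate rather than a defect relative to the published proof.
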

The proof can be found in Appendix~\ref{app:proofcons}.

\subsection{Causal Subset Identifiability} \label{sec:subset_identifiability}

It is possible to identify a subset of the causal parents under even
weaker conditions. This can be done in an idea similar to invariant
causal prediction \citep{Peters2016jrssb}. Define the hypothesis
\begin{equation*}
    H_0(S):\;
    \exists \beta\in\R^d \text{ such that $\supp(\beta)=S$ and } \beta\in\mathcal{B}
\end{equation*}
and the corresponding Anderson-Rubin test
$$
\mathds{1}(T(\hat{\beta}_{\operatorname{LIML}}(S))>F_{n-m,m}^{-1}(1-\alpha)).
$$
We then have the following guarantees.
\begin{proposition} \label{prop:icp-test}
\begin{itemize}
\item[(i)] Consider i.i.d.\ data of $(I, X, Y)$ from a data generating
  process of the form
  $$
  Y := X^{\top}\beta^* + g(H, \epsilon^Y),
  $$
  with $I \independent (H, \epsilon^Y)$, $g(H, \epsilon^Y)$
  Gaussian and $(I, X, Y)$ mean zero.  Then,
  \begin{equation} \label{eq:ICP-test}
    \lim_{n \rightarrow \infty} P\left(\textstyle\bigcap_{\substack{S: |S| = |\PA[Y]| \text{ and }\\
          H_0(S) \text{ accepted} }} S \subseteq
      \PA[Y] \right) \geq 1-\alpha,
  \end{equation}
  where we define the intersection over an empty index set as the empty set.
\item[(ii)] Consider now i.i.d.\ data from a data generating process
  of the form~\eqref{eq:scm} such that $g(H, \epsilon^Y)$ is Gaussian and $(I, X, Y)$ has mean zero.
  If (A1) and (A2) hold, then
  \begin{equation} \label{eq:ICPmin-test} \lim_{n \rightarrow \infty}
    P\left( \textstyle\bigcap_{\substack{S: |S| = M \text{ and }\\
          H_0(S) \text{ accepted} }} S \subseteq \PA[Y]
      \right) \geq 1 - \alpha,
  \end{equation}
  where $M := \min\{|S|: H_0(S) \text{ accepted}\}$.
\end{itemize}
\end{proposition}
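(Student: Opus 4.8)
The plan is to run an invariant-causal-prediction style intersection argument. The single observation driving both parts is this: if the true parent set $\PA(Y)$ itself appears among the accepted sets that are intersected, then, being one of the sets in the intersection, it forces the whole intersection to be contained in $\PA(Y)$. So in both parts I would lower-bound the probability of the target event by the probability that $\PA(Y)$ is (a) not rejected and (b) eligible to enter the intersection, i.e.\ has the right cardinality.

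For part (i), I would first note that $I\independent(H,\epsilon^Y)$ makes the moment condition \eqref{eq:moment_eq} hold, so $\beta^*\in\mathcal{B}$ with $\supp(\beta^*)=\PA(Y)$, and hence $H_0(\PA(Y))$ is true. By \eqref{eq:test_stat_prop} we have $T(\hat{\beta}_{\operatorname{LIML}}(\PA(Y)))\leq T(\beta^*)$, and since $g(H,\epsilon^Y)$ is Gaussian, $T(\beta^*)\sim F(n-m,m)$ exactly; therefore $H_0(\PA(Y))$ is accepted with probability at least $1-\alpha$ for every $n$. Because $\abs{\PA(Y)}=\abs{\PA(Y)}$, the set $\PA(Y)$ is eligible for the intersection in \eqref{eq:ICP-test}, so whenever it is accepted the intersection lies in $\PA(Y)$. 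Thus the event in \eqref{eq:ICP-test} contains $\{H_0(\PA(Y))\text{ accepted}\}$, which has probability at least $1-\alpha$; taking $n\to\infty$ gives the claim. Note that part (i) needs only validity of the test, not any power statement.

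For part (ii) the same containment shows that the event in \eqref{eq:ICPmin-test} contains $\{H_0(\PA(Y))\text{ accepted}\}\cap\{M=\abs{\PA(Y)}\}$, since on this event $\PA(Y)$ has cardinality exactly $M$ and is accepted, hence enters the intersection. It therefore suffices to lower-bound the probability of this event. Acceptance of $H_0(\PA(Y))$ already implies $M\leq\abs{\PA(Y)}$ and, as in part (i), occurs with probability at least $1-\alpha$, so I only need $P(M<\abs{\PA(Y)})\to0$. Here I use (A1)--(A2): by Theorem~\ref{thm:sparse_identifiability}, $\beta^*$ solves \eqref{eq:sparse_optim}, so every $\beta\in\mathcal{B}$ satisfies $\norm{\beta}_0\geq\abs{\PA(Y)}$; consequently, for every $S$ with $\abs{S}<\abs{\PA(Y)}$ there is no $\beta\in\mathcal{B}$ with $\supp(\beta)\subseteq S$, i.e.\ $H_0(S)$ is false. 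If each such $H_0(S)$ is rejected with probability tending to one, a union bound over the finitely many such $S$ yields $P(M<\abs{\PA(Y)})\to0$, and combining the two bounds gives $\liminf_n P(\cdot)\geq 1-\alpha$.

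The main obstacle is exactly this power statement: showing that the Anderson--Rubin test based on $\hat{\beta}_{\operatorname{LIML}}(S)$ rejects a false $H_0(S)$ with probability tending to one. Since $T(\hat{\beta}_{\operatorname{LIML}}(S))=\min_{\supp(\beta)\subseteq S}T(\beta)$, I would show that for such $S$ the minimal population misfit $\min_{\supp(\beta)\subseteq S}\norm{\cov(I,Y-X^{\top}\beta)}$ is strictly positive (it cannot vanish, else some $\beta\in\mathcal{B}$ would have support contained in $S$). Consequently the numerator and denominator of the ratio in \eqref{eq:AR_test_stat} are both of order $n$ with a strictly positive limiting ratio, so the factor $(n-m)/m$ drives $T(\hat{\beta}_{\operatorname{LIML}}(S))\to\infty$ and the statistic exceeds the fixed critical value $F_{n-m,m}^{-1}(1-\alpha)$ with probability tending to one. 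I expect this divergence, together with uniform control over the finitely many subsets, to be precisely the consistency machinery already developed for Theorem~\ref{thm:cons}, which I would invoke rather than re-derive.
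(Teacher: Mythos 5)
Your proposal is correct and follows essentially the same route as the paper's proof: both parts reduce to the containment of the target event in the event that $H_0(\PA[Y])$ is accepted (using $T(\hat{\beta}_{\operatorname{LIML}}(\PA[Y]))\leq T(\beta^*)$ and the exact level of the Anderson--Rubin test), and part (ii) additionally shows $P(M<\abs{\PA(Y)})\to 0$ via Theorem~\ref{thm:sparse_identifiability} together with the power/consistency machinery of Theorem~\ref{thm:cons}, exactly as the paper does.
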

The first statement requires the sparsity $\|\beta^*\|_0$ of $\beta^*$
to be known. It still holds when replacing $|\PA[Y]|$ by any $k$ such
that $|\PA[Y]| \leq k \leq d$.  The second statement does not require
knowledge of $\|\beta^*\|_0$ and provides a guarantee when increasing
the subset size until one has found a set that is accepted.  The proof
of Proposition~\ref{prop:icp-test} can be found in
Appendix~\ref{app:prop:icp-test}.

\begin{remark}[Allowing for children of $Y$] \label{rem:children}
  We now discuss the scenario where some of the covariates are causal
  descendants of the response $Y$.  More precisely, we extend the
  model in \eqref{eq:scm} to
  \begin{equation*}
    \begin{split}
      X &:= BX + \gamma Y + AI + h(H, \epsilon^X)\\ 
      Y &:= {\beta^*}^\top X + g(H, \epsilon^Y),
    \end{split}
  \end{equation*}
  where we assume that the matrix
  \begin{equation*}
    B_{\text{ext}}\coloneqq
    \begin{pmatrix}
      B & \gamma\\
      (\beta^*)^{\top} &0
    \end{pmatrix}
    \in\R^{(d+1)\times (d+1)}
  \end{equation*}
  is invertible. Theorem~\ref{thm:sparse_identifiability} and
  therefore also the results in Sections~\ref{sec:method} and
  \ref{sec:subset_identifiability} still hold when using
  $C_{\text{ext}}\coloneqq (A^{\top}, 0)(\vI -
  B_{\text{ext}})^{-\top}_{1:d, \cdot}$ instead of $C$. Assumption
  (A2), however, becomes rather restrictive: If there is a child of
  $Y$ such that all directed paths from $I$ to that child go through
  $Y$, (A2) is not satisfied as the exact intervention effect on $Y$
  is recoverable from that child.  In particular, in this generalized
  setting, Proposition~\ref{prop:random_proj} or (B1) and (B2) no
  longer imply that (A2) holds almost surely.
\end{remark}

\section{Numerical experiments}
\label{sec:experiments}

For the numerical experiments, we consider models of the form
\eqref{eq:scm} with $h(H, \epsilon^X)=H+\epsilon^X$,
$g(H, \epsilon^Y)=H+\epsilon^Y$ and dimensions $d=20$, $q=1$ and
$m=10$. We generate $2000$ random models of this form using the
following procedure:
\begin{itemize}[noitemsep,topsep=0pt,parsep=0pt,partopsep=0pt]
\item Generate a random matrix $B\in\R^{20\times 20}$ by sampling a
  random causal order over $X^1,\ldots,X^{20}$. $B$ then has a
  zero-structure that corresponds to a fully connected graph with this
  causal order. Each non-zero entry in $B$ is drawn independently and
  uniformly from $(-1.5, -0.5)\cup(0.5,1.5)$. Finally, each row of $B$
  is rescaled by the maximal value in each row (using one if it is a
  zero row).
\item Generate a random matrix $A\in\R^{20\times 10}$ by sampling each
  entry independently with distribution $\text{Bernoulli}(1/10)$ and
  setting all diagonal entries to $1$.
\item Generate the parameter $\beta^*\in\R^{20}$ by sampling two
  random coordinates uniformly from $\{1,\ldots,d\}$ and setting them
  to $1$. All remaining coordinates are set to zero.
\item The random variables $I$, $H$, $\epsilon^X$ and $\epsilon^Y$ are
  all drawn as i.i.d.\ standard normal.
\end{itemize}
For each random model we sample $6$ data sets with sample sizes
$n\in\{50, 100, 200, 400, 800, 1600\}$. For each data set, we apply
the following four methods: (i) \spaceIV; this is our proposed method
described in Algorithm~\ref{alg:sparse_iv} with $s_{\max}=3$. (ii)
\texttt{OLS-sparse}; this method goes over all subsets of size at most
$s_{\max}$, fits a linear OLS and then selects the subset with the
smallest AIC.  We also compare our estimator to two oracle methods.
(iii) \texttt{oracle-|PA|}; this method iterates over all subsets with
size $2$ (correct parent size), fits the moment
equation~\eqref{eq:moment_eq} and selects the best subset in terms of
a squared loss based on the moment equation. (iv) \texttt{oracle-PA};
this method considers the correct parent set and fits the moment
equation~\eqref{eq:moment_eq}. Each method results in a sparse
estimate $\hat{\beta}$ of $\beta^*$ based on which we compute the root
mean squared error (RMSE) given by $\|\beta^*-\hat{\beta}\|_2$.

\begin{figure}[t]
    \centering
    \includegraphics[width=\linewidth]{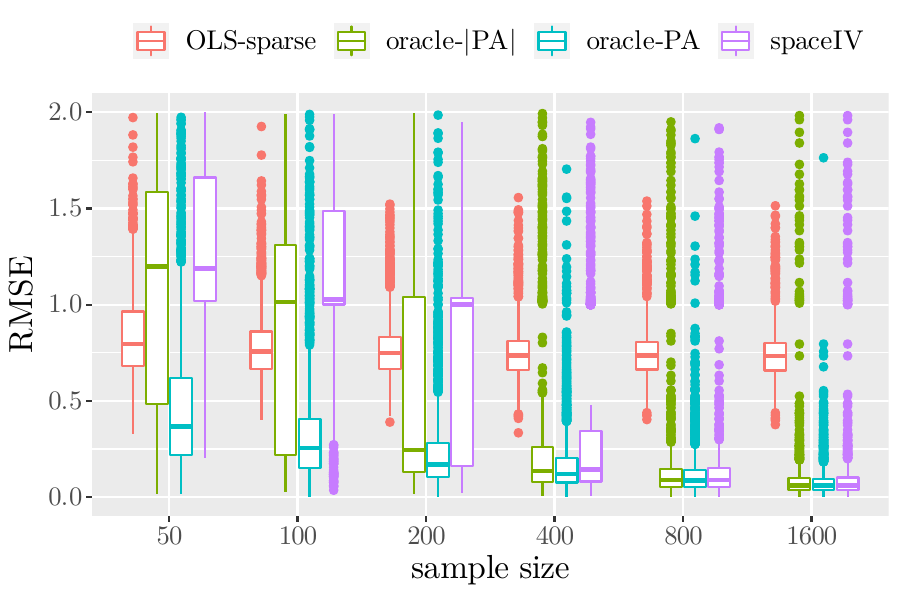}
    \caption{Results for all random models that satisfy (A1)-(A3) (in
      total $1867$ out of $2000$ models). The median RSME of the
      \spaceIV estimator converges to zero as the simple size
      increases, which does not hold for \texttt{OLS-sparse}. Note
      that some of the outliers are cut-off in this plot.}
    \label{fig:consistency_valid}
\end{figure}
For each random model, we explicitly check whether the assumptions
(A1) and (A3) are satisfied by computing $C$ and verifying the
conditions\footnote{Assumption (A2) is satisfied by construction
  because we pick random coefficients for the $B$-matrix, see also
  (B2).}. The results, considering only the random models for which
assumptions (A1)--(A3) are satisfied, are given in
Figure~\ref{fig:consistency_valid}.  As expected, \spaceIV indeed
seems to consistently estimate the causal parameter $\beta^*$, while
\texttt{OLS-sparse} does not. Furthermore, \spaceIV performs worse as
the two oracle methods, illustrating that the estimation in $\spaceIV$
contains three parts: estimating the correct sparsity, estimating the
correct parents set and finally estimating the correct parameters. A
mistake in any of these three steps may result in substantial RMSE,
which explains the outliers in the plot.

To investigate the consistency of estimating the correct sparsity
level in more detail, we consider the fraction of times the correct
sparsity level was selected by \spaceIV. The result is given in
Figure~\ref{fig:expected_sparsity}. It suggests that the sparsity
level is consistently estimated by \spaceIV.
\begin{figure}[ht]
    \centering
    \includegraphics[width=\linewidth]{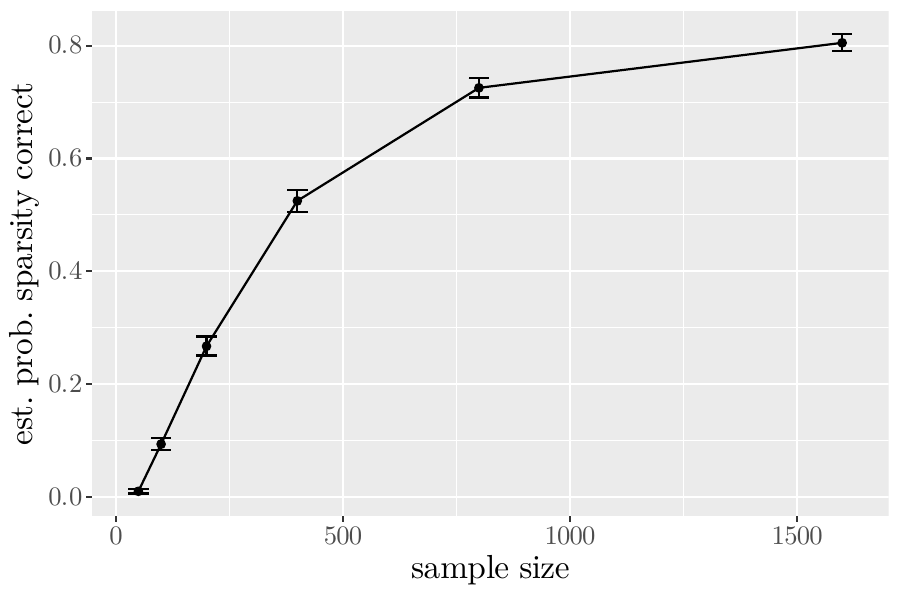}
    \caption{Expected fraction of random models for which \spaceIV
      estimated the correct sparsity level. Only random models that
      satisfy (A1)-(A3) are considered (in total $1867$ models). As
      the sample size increases the estimation of the sparsity level
      becomes more accurate.}
    \label{fig:expected_sparsity}
\end{figure}

Finally, to investigate the performance of \spaceIV based on the
assumptions (A1)--(A3), we compared the performance of all methods at
sample size $n=1600$ depending on which assumptions are satisfied.
(Assumption (A2) is satisfied with probability one, see
Proposition~\ref{prop:random_proj}.) The results are shown in
Figure~\ref{fig:mse_vs_assumptions}. As expected given the theoretical
results presented in Section~\ref{sec:identifsparse}, \spaceIV only
performs well if all assumptions are satisfied. If only assumption
(A1) is satisfied, there are multiple sets with sparsity $2$ for which
the moment equation~\eqref{eq:moment_eq} can be satisfied. Therefore,
while the oracle with the correct parent sets is able to estimate the
causal parameter, \spaceIV and the oracle that only uses the sparsity
level may select wrong sets leading to a larger error. Moreover, if
none of the assumptions are satisfied the causal parameter is not even
identifiable if the true parent set is known.
\begin{figure}[ht]
    \centering
    \includegraphics[width=\linewidth]{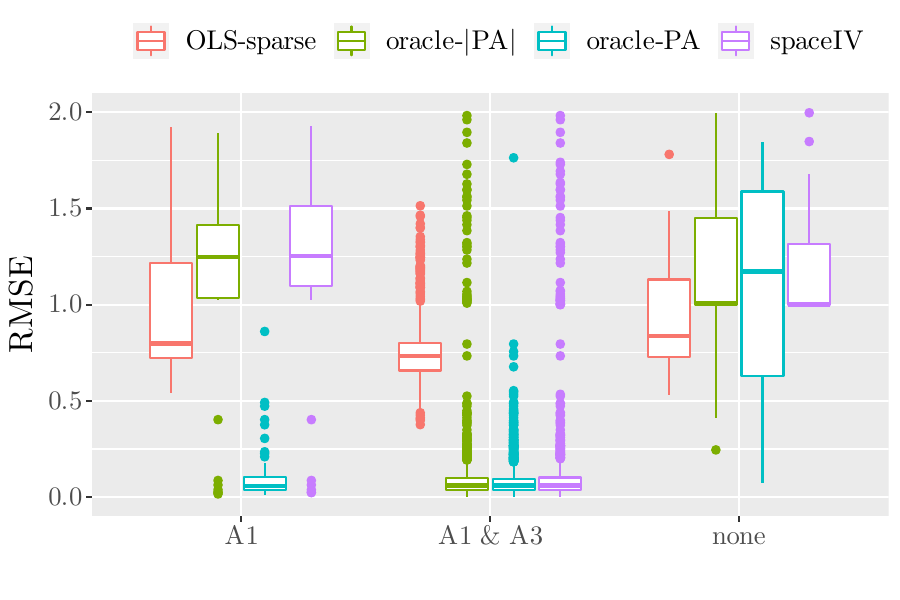}
    \caption{Results for all $2000$ random models with $n=1600$. We
      split the models into three cases depending on which of the
      assumptions (A1) and (A3) are satisfied (the group `(A1)'
      contains $83$ models, the group `(A1) \& (A3)' contains $1867$
      models and the group `none' contains $50$ models). If none of
      the assumptions are satisfied, not even the oracle with known
      parent set works. If only (A1) is satisfied, multiple sets of
      size $2$ are able to satisfy the moment
      equation~\eqref{eq:moment_eq} and \spaceIV may not estimate the
      correct set. These findings are in par with
      Theorem~\ref{thm:sparse_identifiability}.}
    \label{fig:mse_vs_assumptions}
\end{figure}

\section{Conclusion and Future Work}
We have analysed some of the benefits that come with assuming a sparse
causal effect in linear IV models.  We have proved identifiability
results that make the causal effect identifiable even if there are
much less instrument nodes than predictors.  Graphical criteria
provide intuition on these results and characterize for which graphs
the identifiability holds (when randomly choosing coefficients). We
have proposed the estimator $\spaceIV$ and evaluated it on finite
samples.  The results support our theoretical findings and show that
the estimator is often able to find the correct sparsity and the
correct parent set.

We believe that the power result for the Anderson-Rubin test may yield
ways for choosing a significance level for finite samples.
Furthermore, it could be interesting to investigate to which extent
our results generalize to nonlinear models.

\begin{acknowledgements}
  NP was supported by a research grant (0069071) from Novo Nordisk
  Fonden. JP was supported by a research grant (18968) from VILLUM
  FONDEN.
\end{acknowledgements}

\bibliography{uai2022-template.bib}

\newpage

\appendix
\section{Proof of
  Proposition~\ref{thm:partial_identifiability}}\label{app:partial}

\begin{proof}
  Fix $j\in\{1,\ldots,d\}$, then it holds that $\beta^*_j$ is
  identifiable by \eqref{eq:moment_eq} if and only if the space
  $\mathcal{B}$ is degenerate in the $j$-th coordinate, that is,
  $\mathcal{B}_j=\{\beta^*_j\}$.  Next, define $M\coloneqq \cov(I, X)$
  and $v\coloneqq\cov(I, Y)$. Then, denoting the Moore-Penrose inverse
  of $M$ by $M^{\dagger}$, we get that for any solution
  $\beta\in\mathcal{B}$ there exists
  $w\in\nullspace{M} \subseteq \mathbb{R}^d$ such that
  \begin{equation}
    \label{eq:decomposition}
    \beta=M^{\dagger}v + w.
  \end{equation}
  Therefore, the space $\mathcal{B}$ has a degenerate $j$-th
  coordinate if and only if $\nullspace{M}_j=\{0\}$. Using
  $M^{\dagger}$, the null space of $M$ can be expressed as
  \begin{equation*}
    \nullspace{M}=\{(\vI-M^{\dagger}M)w\,\vert\, w\in\R^d\}.
  \end{equation*}
  Next, \eqref{eq:scm} and the assumption of joint independence of
  $I$, $\xi^X\coloneqq h(H, \epsilon^X)$ and
  $\xi^Y\coloneqq g(H, \epsilon^X)$ imply that
  \begin{align*}
    M=\cov[I, X]
    &=\cov\left[I, (\vI-B)^{-1}(AI+\xi^X)\right]\\
    &=\cov[I]A^{\top}(\vI-B)^{-\top}\\
    &=\cov[I]C.
  \end{align*}
  Therefore, using the properties of the Moore-Penrose inverse and that $\cov[I]$ is invertible we get that
  \begin{equation}
    \label{eq:MM_part1}
    M^{\dagger}M=C^{\dagger}\cov[I]^{-1}\cov[I]C.
  \end{equation}
  Hence, we get that $M^{\dagger}M=C^{\dagger}C$ which implies that
  $\nullspace{M}=\nullspace{C}$. This proves the first part of the
  statement. The second part of the proposition uses
  \eqref{eq:decomposition} together with
  $\nullspace{M}=\nullspace{C}$. This completes the proof of
  Proposition~\ref{thm:partial_identifiability}.
\end{proof}

\section{Further Results} \label{app:additional}
\begin{proposition}
  \label{prop:random_proj}
  Let $A\in\R^{n\times m}$ and $B\in\R^{n\times p}$ be two matrices
  satisfying
  \begin{equation*}
    \rank{B}\leq\rank{A}
    \quad\text{and}\quad
    \im{A}\neq\im{B}
  \end{equation*}
  and let $W\in\R^m$ be a random variable with a distribution on
  $\R^m$ that is absolutely continuous with respect to Lebesgue
  measure. Then it holds that
  \begin{equation*}
    P(AW\in\im{B})=0.
  \end{equation*}
\end{proposition}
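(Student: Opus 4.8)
The plan is to reduce the probabilistic statement to a purely linear-algebraic fact about the preimage of $\im{B}$ under $A$, and then invoke the standard observation that proper linear subspaces of $\R^m$ are Lebesgue-null.

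First I would rewrite the event. The event $\{AW\in\im{B}\}$ coincides with $\{W\in A^{-1}(\im{B})\}$, where $A^{-1}(\im{B}):=\{w\in\R^m\,\vert\,Aw\in\im{B}\}$ denotes the preimage of the subspace $\im{B}$ under the linear map $w\mapsto Aw$. Being the preimage of a linear subspace under a linear map, $A^{-1}(\im{B})$ is itself a linear subspace of $\R^m$. The whole proof therefore hinges on showing that this subspace is \emph{proper}.

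The key step is to establish that $A^{-1}(\im{B})\neq\R^m$, which is equivalent to $\im{A}\not\subseteq\im{B}$. I would argue this by contradiction using the two hypotheses. Suppose $\im{A}\subseteq\im{B}$. Then $\rank{A}=\dimension(\im{A})\leq\dimension(\im{B})=\rank{B}$. Combining this with the assumption $\rank{B}\leq\rank{A}$ forces $\rank{A}=\rank{B}$, so the inclusion $\im{A}\subseteq\im{B}$ is between two subspaces of equal dimension and must be an equality, i.e.\ $\im{A}=\im{B}$. This contradicts the hypothesis $\im{A}\neq\im{B}$. Hence $\im{A}\not\subseteq\im{B}$, which means there exists $w_0\in\R^m$ with $Aw_0\notin\im{B}$, so $A^{-1}(\im{B})$ is a proper subspace of $\R^m$ and thus has dimension strictly less than $m$.

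Finally, any proper linear subspace of $\R^m$ (being contained in the zero set of a nonzero linear functional) has Lebesgue measure zero. Since the law of $W$ is absolutely continuous with respect to Lebesgue measure, it assigns probability zero to every Lebesgue-null set, and in particular to $A^{-1}(\im{B})$. Therefore $\P(AW\in\im{B})=\P(W\in A^{-1}(\im{B}))=0$, as claimed. I do not anticipate a genuine obstacle; the only point demanding a little care is the contradiction step that upgrades the two hypotheses to the strict non-inclusion $\im{A}\not\subseteq\im{B}$, together with the routine measure-theoretic fact that lower-dimensional subspaces are null under any absolutely continuous measure.
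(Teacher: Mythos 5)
Your proof is correct and follows essentially the same route as the paper's: both hinge on the identical contradiction argument that $\im{A}\subseteq\im{B}$ together with $\rank{B}\leq\rank{A}$ would force $\im{A}=\im{B}$. The only difference is cosmetic: where the paper makes the final measure-zero step explicit by expanding the columns of $A$ in an orthogonal basis adapted to $\im{B}^{\bot}$ and reducing to a single nontrivial linear equation in $W$, you observe directly that $A^{-1}(\im{B})$ is a proper linear subspace of $\R^m$ and hence Lebesgue-null, which is a slightly cleaner packaging of the same fact.
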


\begin{proof}
  We begin by showing that
  \begin{equation}
    \label{eq:orth_intersection}
    \im{B}^{\bot}\cap\im{A}\neq\varnothing.
  \end{equation}
  Assume for the sake of contradiction this is not true. Then it would
  hold that $\im{A}\subseteq\im{B}$. Moreover, since
  by assumption $\rank{B}\leq\rank{A}$ this would imply that
  $\im{A}=\im{B}$, which contradicts the assumptions on $A$ and
  $B$. Hence, \eqref{eq:orth_intersection} is true.

  Next, let $b_1,\ldots,b_n\in\R^n$ be an orthogonal basis of $\R^n$ such
  that
  \begin{equation*}
    \operatorname{span}(b_1,\ldots,b_{k})=\im{B}^{\bot}
  \end{equation*}
  and
    \begin{equation*}
    \operatorname{span}(b_{k+1},\ldots,b_{n})=\im{B}.
  \end{equation*}
  Then, for every $\ell\in\{1,\ldots,m\}$ there exits unique
  $\alpha_1^{\ell},\ldots,\alpha_n^{\ell}\in\R$ such that
  \begin{equation*}
    A_{\ell}=\sum_{i=1}^n\alpha_i^{\ell}b_i.
  \end{equation*}
  Furthermore, by \eqref{eq:orth_intersection}, it holds that
  there exists at least one $i^*\in\{1,\ldots,k\}$ and
  $\ell^*\in\{1,\ldots,m\}$ such that $\alpha_{i^*}^{\ell^*}\neq 0$.
  Furthermore, for every $w\in\R^m$ it holds that
  \begin{equation*}
    Aw
    =\sum_{\ell=1}^mw^{\ell}A_{\ell}
    =\sum_{\ell=1}^m\sum_{i=1}^nw^{\ell}\alpha_i^{\ell}b_i
    =\sum_{i=1}^n\left(\sum_{\ell=1}^mw^{\ell}\alpha_i^{\ell}\right)b_i.
  \end{equation*}
  This implies that $Aw\in\im{B}$ if and only if
  $\sum_{\ell=1}^mw^{\ell}\alpha_i^{\ell}=0$ for all
  $i\in\{1,\ldots,k\}$. Using this we get
  \begin{align*}
    P(AW\in\im{B})
    &=P(\forall i\in\{1,\ldots,k\}:\,
      \textstyle\sum_{\ell=1}^mW^{\ell}\alpha_i^{\ell}=0)\\
    &\leq P(\textstyle\sum_{\ell\neq
      \ell^*}W^{\ell}\alpha_{i^*}^{\ell}=W^{\ell^*}\alpha_{i^*}^{\ell^*})\\
    &=0,
  \end{align*}
  where for the last step we used that the distribution of $W$ is
  absolutely continuous with respect to Lebesgue measure.
  This completes the proof of Proposition~\ref{prop:random_proj}.
\end{proof}

\section{Proof of Theorem~\ref{thm:cons}} \label{app:proofcons}

\begin{proof}\footnote{Theorem~\ref{thm:cons} and this proof have been updated to fix a mistake in the accepted UAI version of this manuscript.
    The new version uses similar arguments as in the proof of \citet[Theorem~3.1]{Huang2024}.}
  First, for any collection of i.i.d.\ mean zero random variables $(V_1, W_1),\ldots,(V_n,W_n)$ define
  $\widehat{\cov}(V)\coloneqq\frac{1}{n}\sum_{i=1}^{n}V_i^2$ and $\widehat{\cov}(V,
  W)\coloneqq\frac{1}{n}\sum_{i=1}^{n}V_iW_i$. Next define 
  for all $\beta\in\mathbb{R}^d$ the following quantities
  \begin{align*}
    \pi&\coloneqq \cov(I)^{-1}\cov(I, Y)\\
    \widehat\pi&\coloneqq \widehat\cov(I)^{-1}\widehat\cov(I, Y)\\
    \Pi&\coloneqq \cov(I)^{-1}\cov(I, X)\\
    \widehat\Pi&\coloneqq \widehat\cov(I)^{-1}\widehat\cov(I, X)\\
    \Sigma(\beta)&\coloneqq [\cov(Y-\pi^{\top}I)+\beta^{\top}\cov(X-\Pi^{\top}I)\beta\\
                 &\quad\qquad- 2\cov(Y-\pi^{\top}I, X-\Pi^{\top}I)\beta]\cov(I)^{-1}\\
    \widehat\Sigma(\beta)&\coloneqq [\widehat\cov(Y-\widehat\pi^{\top}I)+\beta^{\top}\widehat\cov(X-\widehat\Pi^{\top}I)\beta\\
                         &\quad\qquad- 2\widehat\cov(Y-\widehat\pi^{\top}I, X-\widehat\Pi^{\top}I)\beta]\widehat\cov(I)^{-1}\\
    t(\beta)&\coloneqq \sqrt{\tfrac{n-m}{m}}\widehat\Sigma(\beta)^{-1/2}(\widehat\pi-\widehat\Pi\beta)\\
    \mu(\beta)&\coloneqq \sqrt{\tfrac{n-m}{m}}\widehat\Sigma(\beta)^{-1/2}(\pi-\Pi\beta).
  \end{align*}
  Then, we can reformulate $T(\beta)$ such that
  \begin{equation*}
    T(\beta)=\tfrac{n-m}{m}(\widehat\pi - \widehat\Pi\beta)^{\top}
    \widehat\Sigma(\beta)^{-1}(\widehat\pi - \widehat\Pi\beta)=\|t(\beta)\|_2^2.
  \end{equation*}
  Moreover, let $\mathcal{C}\subseteq\mathbb{R}^d$ be a compact set and denote by $\beta^*\in\mathcal{C}$ a minimizer of
  $\inf_{\beta\in\mathcal{C}}\|t(\beta)\|_2^2$. Then, we can use standard probability bounds to get for all
  $x\in[0,\infty)$ that
  \begin{align}
    &P\left(\inf_{\beta\in\mathcal{C}}\|t(\beta)\|_2^2\leq x\right)\\
    &=P\left(\|t(\beta^*)-\mu(\beta^*)+\mu(\beta^*)\|_2\leq \sqrt{x}\right)\nonumber\\
    &\leq P\left(\big|\|t(\beta^*)-\mu(\beta^*)\|_2-\|\mu(\beta^*)\|_2\big|\leq \sqrt{x}\right)\nonumber\\
    &\leq P\left(\|t(\beta^*)-\mu(\beta^*)\|_{2}\geq\|\mu(\beta^*)\|_{2}\right)\nonumber\\
    &\qquad+
    P\left(\|\mu(\beta^*)\|_2-\|t(\beta^*)-\mu(\beta^*)\|_2\leq \sqrt{x}\right)\nonumber\\
    &\leq 2 P\left(\|t(\beta^*)-\mu(\beta^*)\|_2\geq \|\mu(\beta^*)\|_2 - \sqrt{x}\right)\nonumber\\
    &\leq 2 P\left(\sup_{\beta\in\mathcal{C}}\|t(\beta)-\mu(\beta)\|_2\geq \inf_{\beta\in\mathcal{C}}\|\mu(\beta)\|_2 - \sqrt{x}\right).\label{eq:upper_bound_infT}
  \end{align}
  Furthermore, it holds that
  \begin{align*}
    &\sup_{\beta\in\mathcal{C}}\|t(\beta)-\mu(\beta)\|_2\\
    &=\|\widehat\Sigma(\beta)^{-1/2}\sqrt{\tfrac{n-m}{m}}((\pi-\Pi\beta)-(\widehat\pi-\widehat\Pi\beta))\|_2\\
    &\leq \|\widehat\Sigma(\beta)^{-1/2}\|_{\operatorname{op}}\sqrt{\tfrac{n-m}{m}}(\|\pi-\widehat{\pi}\|_2 + \|\Pi\beta-\widehat\Pi\beta\|_2)\\
    &\leq \left(\lambda_{\min}(\widehat{\Sigma}(\beta))m\right)^{-1/2}\sqrt{n}(\|\pi-\widehat{\pi}\|_2 + \|\Pi\beta-\widehat\Pi\beta\|_2).
  \end{align*}
  By classical asymptotic theory it can be shown that $\widehat\Sigma(\beta)$ converges to $\Sigma(\beta)$, which by
  assumptions on the model is invertible. Hence, we get that $\sup_{\beta\in\mathcal{C}}\|t(\beta)-\mu(\beta)\|_2$ is
  asympototically bounded in probability.
  Furthermore, using that $T$ does not depend on the scale of $\beta$ it follows together with
  \eqref{eq:upper_bound_infT} that
  \begin{align}
    &P\left(\inf_{\beta:\|\beta\|_0=s}T(\beta)\leq x\right)\nonumber\\
    &=P\left(\inf_{\substack{\beta:\|\beta\|_0=s\\\|\beta\|_2=1}}T(\beta)\leq x\right)\nonumber\\
    &\leq 2 P\left(\sup_{\beta:\|\beta\|_2=1}\|t(\beta)-\mu(\beta)\|_2\geq\right.\nonumber\\
    &\qquad\qquad\qquad\qquad\left.\inf_{\beta:\|\beta\|_0=s}\|\mu(\beta)\|_2 - \sqrt{x}\right).\label{eq:inf_Q_bound}
  \end{align}
  Now we first prove (i). Fix $s \in \mathbb{N}$ such that
  $s < \|\beta^*\|_0$ (if $\|\beta^*\|_0=1$, the proof simplifies and
  one can consider \eqref{eq:part_1} directly). Then, for all $\beta \in \mathbb{R}^d$ such that $\|\beta\|_0 = s$, we
  have by Theorem~\ref{thm:sparse_identifiability} that $\cov\left(I, Y-X^{\top}\beta\right) \neq 0$. Furthermore, using that
  $\pi-\Pi\beta=\cov(I)^{-1}\cov(I, Y-\beta^{\top}X)$ and that
  $\beta \mapsto \|\cov(I)^{-1}\cov\left(I,
    Y-X^{\top}\beta\right)\|_2^2=\|\pi - \Pi\beta\|_2^2$ is a quadratic form, there
  exists $c>0$ such that for all
  $\beta \in \mathbb{R}^d$
  with $\|\beta\|_0 = s$ it holds that $\|\pi-\Pi\beta\|_2>c$.

  As $n$ tends to infinity, it holds that $\inf_{\beta:\|\beta\|_0=s}\|\mu(\beta)\|_2$ diverges to infinity in
  probability, since
  \begin{align*}
      \inf_{\beta:\|\beta\|_0=s}\|\mu(\beta)\|_2
      \geq\inf_{\beta:\|\beta\|_0=s}\sqrt{\tfrac{n-m}{m}}\|\widehat\Sigma(\beta)\|^{-1/2}_{\operatorname{op}}c.
  \end{align*}
  Therefore, by \eqref{eq:inf_Q_bound} it holds that
  \begin{align*}
    &\lim_{n\to\infty} P\left(\varphi_s = 1 \right)\\ 
    &= \lim_{n \to \infty} P\left(\inf_{\beta: \norm{\beta}_0 = s} T(\beta) > F_{n-m,m}^{-1}(1-\alpha)\right) \\ 
    &\geq 1 - \lim_{n\to\infty} 2P\left(\sup_{\beta:\|\beta\|_2=1}\|t(\beta)-\mu(\beta)\|_2\geq\right.\\
    &\qquad\qquad\left.\inf_{\beta:\|\beta\|_0=s}\|\mu(\beta)\|_2 - \sqrt{F_{n-m,m}^{-1}(1-\alpha)}\right)\\
    &= 1.
  \end{align*}
  Since this holds for any $s\in\mathbb{N}$ such that
  $s < \|\beta^*\|_0$, we have
  \begin{align}
    &\lim_{n\rightarrow\infty}P(\|\hat{\beta}_{\leq s_{\max}}\|_0 = \|\beta^*\|_0) \nonumber\\
    &\quad =
      \lim_{n\rightarrow\infty}P\left(\min_{s<\|\beta^*\|_0} \phi_s = 1, 
      \phi_{\|\beta^*\|_0} = 0\right)\nonumber\\
    &\quad =
      \lim_{n\rightarrow\infty}P(\phi_{\|\beta^*\|_0} = 0)\nonumber\\
    &\quad \geq 1-\alpha,\label{eq:part_1}
  \end{align}
  where the last statement follows from 
  the fact that $\phi_s$ has valid level.

  Statement (ii) follows with the same argument noting that for all
  $\varepsilon > 0$ there exists a $c >0$ such that for all
  $\beta \in \mathbb{R}^d$ satisfying $\|\beta\|_0 < \|\beta^*\|_0$ or
  $\|\beta\|_0 = \|\beta^*\|_0$ and
  $\|\beta - \beta^*\|_2 \geq \varepsilon$, we have
  $\|\pi - \Pi \beta\|_2 > c > 0$, again, using
  Theorem~\ref{thm:sparse_identifiability}.  This concludes the proof
  of Theorem~\ref{thm:cons}.
\end{proof}

\section{Proof of
  Proposition~\ref{prop:icp-test}} \label{app:prop:icp-test}

\begin{proof}
  To prove the first statement, we note that
  \begin{align*}
    &\left\{\textstyle\bigcap_{\substack{S: |S| = |\PA[Y]| \text{ and }\\
    H_0(S) \text{ accepted}
    }} S \subseteq \PA[Y]\right\}\\
    &\qquad\qquad\qquad
      \supseteq
      \left\{H_0(\PA[Y]) \text{ accepted}\right\}.
  \end{align*}
  But because 
  $$
  T(\beta^*) \geq 
  T(\hat{\beta}_{\operatorname{LIML}}(\PA[Y])),
  $$
  we have 
  $$
  P\left(H_0(\PA[Y]) \text{ accepted }\right) \geq 1 - \alpha. 
  $$

  To prove the second statement, observe that by the definition of $M$
  it holds that
  $$
  \Big\{ 
  M \geq \|\beta^*\|_0
  \Big\}
  \supseteq
  \left\{\min_{s<\|\beta^*\|_0} \phi_s = 1\right\}
  $$
  and therefore
  \begin{align*}
    &\left\{ \textstyle\bigcap_{\substack{S: |S| = M \text{ and }\\
    H_0(S) \text{ accepted}
    }} S \subseteq \PA[Y] \right\}\\
    &\quad\supseteq
      \Bigg\{\big\{\min_{s<\|\beta^*\|_0} \phi_s = 1\big\}\\
    &\qquad\qquad\quad \cap
      \{T(\hat{\beta}_{\operatorname{LIML}}(\PA[Y]))\leq F_{n-m,m}^{-1}(1-\alpha)\} \Bigg\}.
  \end{align*}
  It follows from the first part of
  Theorem~\ref{thm:sparse_identifiability} that for all
  $\beta \in \mathbb{R}^d$ such that $\|\beta\|_0 < \|\beta^*\|_0$, we
  have $\cov\left(I, Y-X^{\top}\beta\right) \neq 0$.  We can therefore
  apply the same arguments as in Theorem~\ref{thm:cons} to argue that
  for all $s < \|\beta^*\|_0$, we have
  $$
  \lim_{n\rightarrow\infty}P(\phi_s = 1)= 1.
  $$
  The statement then follows from
  $ T(\beta^*) \geq T(\hat{\beta}_{\operatorname{LIML}}(\PA[Y]))$ and
  the fact that the Anderson-Rubin test holds level.  This completes
  the proof of Proposition~\ref{prop:icp-test}.
\end{proof}

\section{Example~\ref{ex:1} continued} \label{app:graphs}

Figure~\ref{fig:ex1-marg} discusses the example graph mentioned in
Example~\ref{ex:1}.
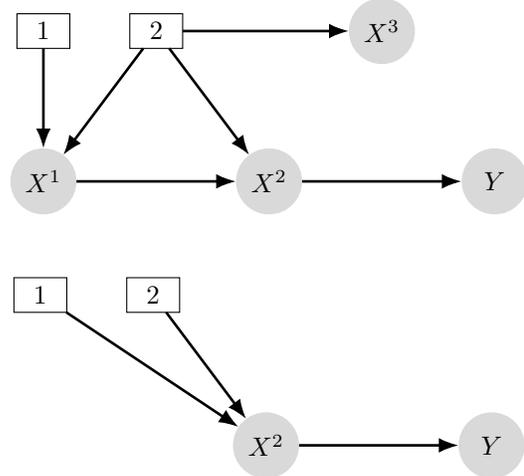
\begin{figure}[ht]
  \centering
  \begin{tikzpicture}[scale=1]
    \tikzstyle{VertexStyle} = [shape = circle, minimum width =
    2.5em, fill=lightgray]
    \Vertex[Math,L=Y,x=0,y=0]{Y}
    \Vertex[Math,L=X^1,x=-6,y=0]{X1}
    \Vertex[Math,L=X^2,x=-3,y=0]{X2}
    \Vertex[Math,L=X^3,x=-1.5,y=2]{X3}
    \tikzstyle{VertexStyle} = [draw, shape = rectangle, minimum
    width=2em]
    \Vertex[Math,L=1,x=-6.0,y=2]{1}
    \Vertex[Math,L=2,x=-4.5,y=2]{2}
    \tikzstyle{VertexStyle} = [draw, dashed, shape = circle, minimum
    width=2.5em]
    \tikzset{EdgeStyle/.append style = {-Latex, line width=1}}
    \Edge(1)(X1)
    \Edge(X1)(X2)
    \Edge(X2)(Y)
    \Edge(2)(X1)
    \Edge(2)(X3)
    \Edge(2)(X2)
  \end{tikzpicture}\vspace{0.8cm}\\
  \begin{tikzpicture}[scale=1]
    \tikzstyle{VertexStyle} = [shape = circle, minimum width =
    2.5em, fill=lightgray]
    \Vertex[Math,L=Y,x=0,y=0]{Y}
    \Vertex[Math,L=X^2,x=-3,y=0]{X2}
    \tikzstyle{VertexStyle} = [draw, shape = rectangle, minimum
    width=2em]
    \Vertex[Math,L=1,x=-6.0,y=2]{1}
    \Vertex[Math,L=2,x=-4.5,y=2]{2}
    \tikzstyle{VertexStyle} = [draw, dashed, shape = circle, minimum
    width=2.5em]
    \tikzset{EdgeStyle/.append style = {-Latex, line width=1}}
    \Edge(1)(X2)
    \Edge(X2)(Y)
    \Edge(2)(X2)
  \end{tikzpicture}
  \caption{Top: Graph copied from Example~\ref{ex:1} and
    Figure~\ref{fig:ex1}.  Assumption (B1) holds because of the path
    $2 \rightarrow X^2$, for example.  For $S=\{1\}$, (B3) (i) is not
    satisfied but (B3) (ii) holds: there is no set $T$ of size one,
    such that all directed paths from $I$ to $\PA(Y)$ go through $T$.
    Therefore, if (B2) holds, the effect $\beta^*$ is identifiable
    (see Theorem~\ref{thm:sparse_identifiability_graph}). If, however,
    we were to remove the second instrument node from
    Example~\ref{ex:1}, (B3)(i) and (ii) would be violated (for set
    $S=\{X^1\}$).  Bottom: Marginalized graph $\mathcal{G}^{\PA(Y)}$.}
  \label{fig:ex1-marg}
\end{figure}

\section{Example violating Assumption (A2)} \label{app:example_A2}

\begin{example} \label{ex:counterA2}
  Consider an SCM of the following form
  {\small
    \begin{align}
      \begin{pmatrix}
        X^1\\
        X^2\\
        X^3
      \end{pmatrix} 
      &:= 
        \begin{pmatrix}
          0 & 0 & 0\\
          0 & 0 & 0\\
          1 & 2 & 0
        \end{pmatrix}
                  \begin{pmatrix}
                    X^1\\
                    X^2\\
                    X^3
                  \end{pmatrix} 
      + 
      \begin{pmatrix}
        4 & 0 \\
        0 & 3 \\
        0 & 0 \\
      \end{pmatrix}
      \begin{pmatrix}
        I^1\\
        I^2
      \end{pmatrix} 
      + h(H, \epsilon^X) 
      \nonumber\\
      Y &:= 
          \begin{pmatrix}
            X^1 & X^2 & X^3
          \end{pmatrix} 
                        \begin{pmatrix}
                          1\\
                          2\\
                          0
                        \end{pmatrix} 
      + g(H, \epsilon^Y), \label{eq:counter_exampleA2}
    \end{align}}
  where   $I^1$, $I^2$, $H$, $\epsilon^Y$, $\epsilon^X$ are jointly independent. 
  Figure~\ref{fig:counter_A2} shows the corresponding graphical representation. In this case, it holds that
  \begin{equation*}
    C=
    \begin{pmatrix}
      1 & 0 & 1\\
      0 &1 & 1
    \end{pmatrix}.
  \end{equation*}
  Hence, the set $S=\{3\}$ violates Assumption (A2). In particular,
  the coefficient $\tilde{\beta}=(0, 0, 1)^{\top}\in\mathcal{B}$
  yields a sparser solution than the causal coefficient
  $(1, 1, 0)^{\top}$. Therefore, the result of
  Theorem~\ref{thm:sparse_identifiability} cannot be valid.
  Assumption (A2) is violated in this example because the coefficients
  can be matched exactly. If the coefficients are chosen randomly with
  a distribution that is absolutely continuous with respect to
  Lebesgue measure, this happens with probability zero, see
  Proposition~\ref{prop:random_proj}.
  \begin{figure}[ht]
    \centering
    \begin{tikzpicture}[scale=1]
      \tikzstyle{VertexStyle} = [shape = circle, minimum width =
      2.5em, fill=lightgray]
      \Vertex[Math,L=Y,x=0,y=0]{Y}
      \Vertex[Math,L=X^1,x=-3,y=2]{X1}
      \Vertex[Math,L=X^2,x=-3,y=-2]{X2}
      \Vertex[Math,L=X^3,x=-6,y=0]{X3}
      \tikzstyle{VertexStyle} = [draw, shape = rectangle, minimum
      width=2em]
      \Vertex[Math,L=1,x=-6.0,y=2]{1}
      \Vertex[Math,L=2,x=-6,y=-2]{2}
      \tikzstyle{VertexStyle} = [draw, dashed, shape = circle, minimum
      width=2.5em]
      \tikzset{EdgeStyle/.append style = {-Latex, line width=1}}
      \Edge[label=4](1)(X1)
      \Edge[label=3](2)(X2)
      \Edge[label=1](X1)(Y)
      \Edge[label=2](X2)(Y)
      \Edge[label=1](X1)(X3)
      \Edge[label=2](X2)(X3)
    \end{tikzpicture}
    \caption{Example graph for which Assumption (A2) can be violated
      if the edge coefficients are fine-tuned to match each other
      exactly.}
  \label{fig:counter_A2}
\end{figure}
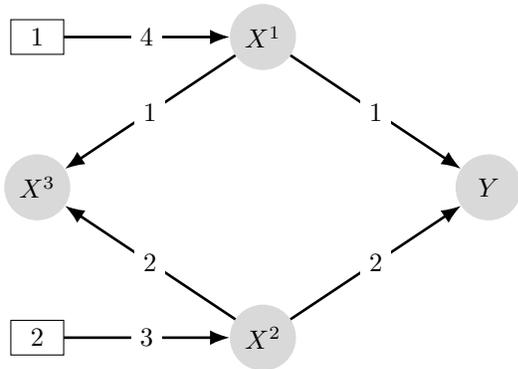
\end{example}

\section{Additional simulation results}

\begin{figure}[t]
    \centering
    \includegraphics[width=\linewidth]{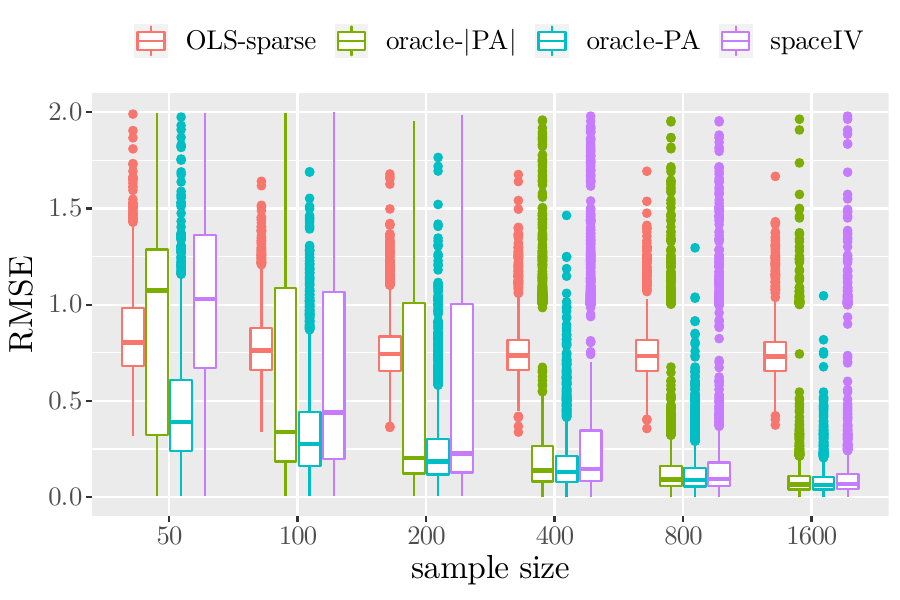}
    \caption{Same experiment as in Figure~\ref{fig:consistency_valid}
      but with TSLS estimator instead of LIML. Results for all random
      models that satisfy (A1)-(A3) (in total $1871$ out of $2000$
      models). The median RSME of the \spaceIV estimator converges to
      zero as the simple size increases, which does not hold for
      \texttt{OLS-sparse}. Note that some of the outliers are cut-off
      in this plot.}
    \label{fig:consistency_valid_TSLS}
\end{figure}

\begin{figure}[ht]
    \centering
    \includegraphics[width=\linewidth]{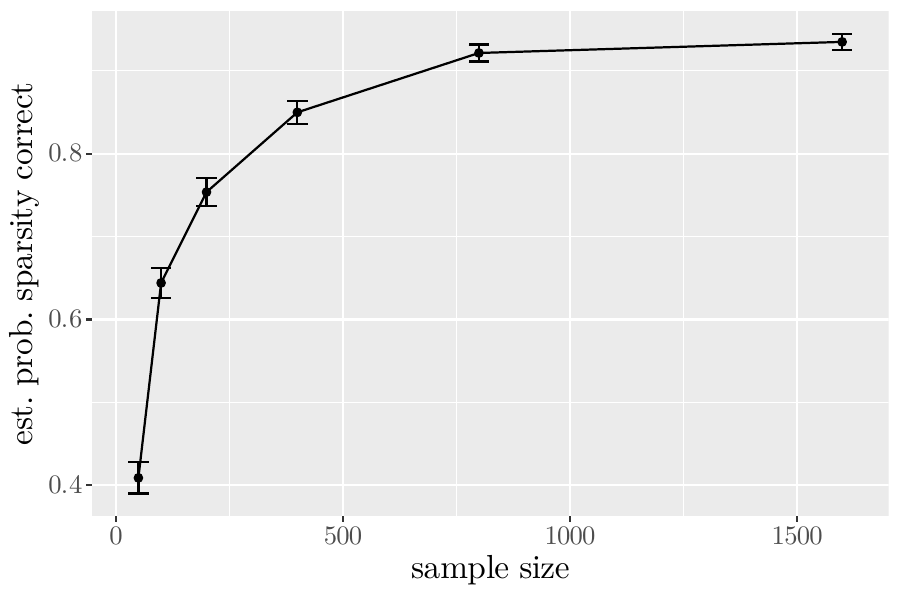}
    \caption{Same experiment as in Figure~\ref{fig:expected_sparsity}
      but with TSLS estimator instead of LIML. Expected fraction of
      random models for which \spaceIV estimated the correct sparsity
      level. Only random models that satisfy (A1)-(A3) are considered
      (in total $1871$ models). As the sample size increases the
      estimation of the sparsity level becomes more accurate.}
    \label{fig:expected_sparsity_TSLS}
\end{figure}

\begin{figure}[ht]
    \centering
    \includegraphics[width=\linewidth]{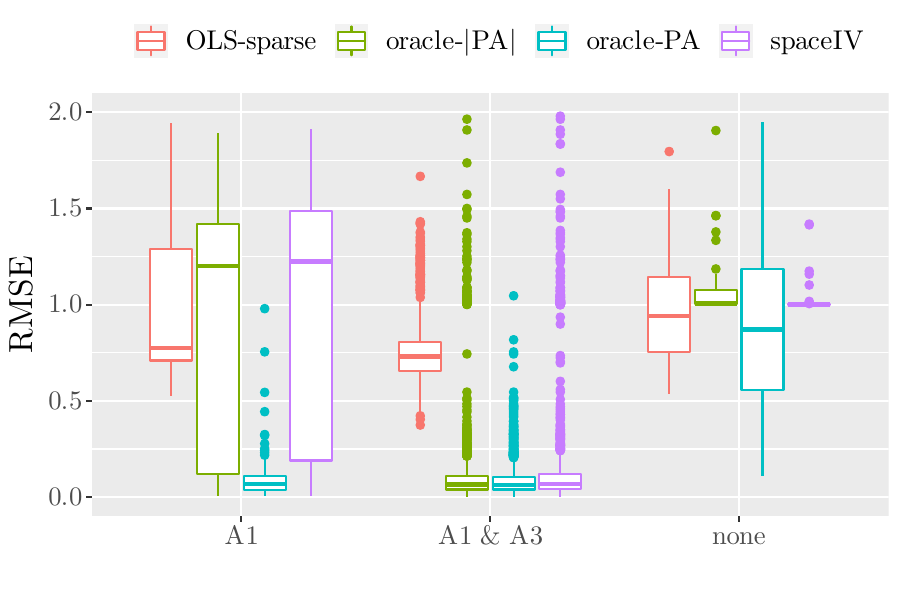}
    \caption{Same experiment as in Figure~\ref{fig:mse_vs_assumptions}
      but with TSLS estimator instead of LIML. Results for all $2000$
      random models with $n=1600$. We split the models into three
      cases depending on which of the assumptions (A1) and (A3) are
      satisfied (the group `(A1)' contains $88$ models, the group
      `(A1) \& (A3)' contains $1871$ models and the group `none'
      contains $41$ models). If none of the assumptions are satisfied,
      not even the oracle with known parent set works. If only (A1) is
      satisfied, multiple sets of size $2$ are able to satisfy the
      moment equation~\eqref{eq:moment_eq} and \spaceIV may not
      estimate the correct set. These findings are in par with
      Theorem~\ref{thm:sparse_identifiability}.}
    \label{fig:mse_vs_assumptions_TSLS}
\end{figure}
\end{document}